\documentclass{article}

 \usepackage[nonatbib,preprint]{neurips_2026}

\usepackage[utf8]{inputenc} 
\usepackage[T1]{fontenc}    
\usepackage{hyperref}       
\usepackage{url}            
\usepackage{booktabs}       
\usepackage{amsfonts}       
\usepackage{nicefrac}       
\usepackage{microtype}      
\usepackage{xcolor}         
\usepackage{amsmath}
\usepackage[capitalize,noabbrev,nameinlink]{cleveref}
\usepackage{tikz}
\usepackage{subcaption}
\usepackage{todonotes}
\usepackage[sort,square,numbers]{natbib}
\usepackage{algorithm}
\usepackage{algpseudocode}
\usepackage{amsthm}
\usepackage{thmtools}
\usepackage{mathtools}
\usepackage{multirow}

\usetikzlibrary{positioning, matrix, shapes.geometric, trees}  
\DeclareMathOperator*{\argmax}{arg\,max}

\title{Test-time Reinforcement Learning \\in Imperfect Information Games}

\author{%
  Ondrej Kubicek  \\
  Czech Technical University in Prague\\
  Carnegie Mellon University\\ 
  \texttt{kubicon3@fel.cvut.cz} \\
  \And
  Viliam Lisý \\
  Czech Technical University in Prague \\
  Artificial Intelligence Center \\
  \texttt{viliam.lisy@fel.cvut.cz} \\
  \AND
  Tuomas Sandholm \\
  Carnegie Mellon University \\
  Strategy Robot, Inc. \\
  Strategic Machine, Inc.\\
  Optimized Markets, Inc.\\
  \texttt{sandholm@cs.cmu.edu } \\ 
}

\begin{document}
\maketitle

\newcommand{\RealNumbers}[0]{\mathbb{R}}
\newcommand{\BinaryNumbers}[0]{\mathbb{B}}
\newcommand{\NaturalNumbers}[0]{\mathbb{N}}
\newcommand{\Expectation}[0]{\mathbb{E}}

\newcommand{\Simplex}{\Delta}

\newcommand{\FOSGame}[0]{\mathcal{G}}
\newcommand{\GeneralSubgameId}[0]{S}
\newcommand{\Subgame}[1]{\FOSGame^{#1}}
\newcommand{\GadgetGame}[1]{\FOSGame^{G,\PublicState}_{#1}}

\newcommand{\PublicIndex}[0]{0}
\newcommand{\InitialIndex}[0]{\text{INIT}}
\newcommand{\BRIndex}[0]{BR}

\newcommand{\Players}[0]{\mathcal{N}}
\newcommand{\Player}[0]{i}
\newcommand{\ChancePlayer}[0]{c}
\newcommand{\OtherPlayer}[0]{j}
\newcommand{\Opponent}[0]{o}
\newcommand{\LastPlayer}[0]{N}
\newcommand{\PlayerFunction}[0]{p}

\newcommand{\WorldStates}[0]{\mathcal{W}}
\newcommand{\WorldState}[0]{w}
\newcommand{\InitWorldState}[0]{\WorldState^{\InitialIndex}}

\newcommand{\Histories}[0]{\mathcal{H}}
\newcommand{\History}[0]{h}
\newcommand{\InitHistory}[0]{\History^{\InitialIndex}}
\newcommand{\HistoryExtend}{\sqsubseteq}

\newcommand{\Actions}[1]{\mathcal{A}_{#1}}
\newcommand{\Action}[1]{a_{#1}}
\newcommand{\ActionLogit}[1]{A_{#1}}

\newcommand{\Observations}[1]{\mathcal{O}_{#1}}
\newcommand{\Observation}[1]{o_{#1}}
\newcommand{\ObservationSet}[1]{\mathbb{O}_{#1}}
\newcommand{\PublicObservations}[0]{\Observations{\PublicIndex}}
\newcommand{\PublicObservation}[0]{\Observation{\PublicIndex}}
\newcommand{\PublicObservationSet}[0]{\ObservationSet{\PublicIndex}}

\newcommand{\Rewards}[1]{\mathcal{R}_{#1}}
\newcommand{\Reward}[1]{r_{#1}}
\newcommand{\WeightedReward}[2]{R_{#1}^{#2}}

\newcommand{\Utility}[1]{u_{#1}}

\newcommand{\Transitions}[0]{\mathcal{T}}

\newcommand{\Infosets}[1]{\mathcal{S}_{#1}}
\newcommand{\Infoset}[1]{s_{#1}}
\newcommand{\PublicStates}[0]{\Infosets{\PublicIndex}}
\newcommand{\PublicState}[0]{\Infoset{\PublicIndex}}
\newcommand{\Abstracted}[1]{\overline{#1}}

\newcommand{\AuxiliaryInfoset}[1]{\Infoset{#1}^{G}}
\newcommand{\AuxiliaryInfosets}[1]{\Infosets{#1}^{G}}

\newcommand{\Trajectory}[0]{\tau}
\newcommand{\TrajectoryLength}[0]{l}
\newcommand{\TrajectoryStep}[0]{t}

\newcommand{\Timestep}[0]{t}
\newcommand{\UnrollStep}[0]{k}

\newcommand{\Strategy}[1]{\pi_{#1}}
\newcommand{\Strategies}[1]{\Pi_{#1}}
\newcommand{\SamplingStrategy}[1]{\mu_{#1}}
\newcommand{\Nash}[1]{\Strategy{#1}^*}
\newcommand{\EpsilonNash}[1]{\Strategy{#1}^{\epsilon}}
\newcommand{\BRStrategy}[1]{\Strategy{#1}^{\BRIndex}}
\newcommand{\BRSet}[1]{\BRIndex_{#1}} 
\newcommand{\BlueprintStrategy}[1]{\overline{\Strategy{#1}}}
\newcommand{\BRBlueprintStrategy}[1]{\overline{\Strategy{#1}^{\BRIndex}}}
\newcommand{\BRValue}[0]{\text{BRV}}

\newcommand{\BeforeStrategy}[1]{\hat{\Strategy{#1}}}
\newcommand{\OptimalContinuationStrategy}[1]{\hat{\Strategy{}}_{#1}^*}

\newcommand{\StrategyUtility}[2]{\Utility{#1}^{#2}}
\newcommand{\CounterfactualValue}[2]{v_{#1}^{#2}}
\newcommand{\CounterfactualActionValue}[2]{q_{#1}^{#2}}
\newcommand{\Regret}[1]{r_{#1}}

\newcommand{\Exploitability}[0]{\mathcal{E}}

\newcommand{\Reach}[1]{P^{#1}}
\newcommand{\PlayerReach}[2]{\Reach{#2}_{#1}}

\newcommand{\StrategyPrior}[1]{\Strategy{#1}^P}

\newcommand{\Sequence}[1]{\sigma_{#1}}
\newcommand{\Sequences}[1]{\Sigma_{#1}}

\newcommand{\SQFStrategy}[1]{\boldsymbol{\Strategy{#1}}}
\newcommand{\SQFValue}[1]{\boldsymbol{v_{#1}}}
\newcommand{\SQFUtility}[0]{\boldsymbol{A}}
\newcommand{\SQFInitRealization}[1]{\boldsymbol{f_{#1}}}
\newcommand{\SQFRealizationConstraints}[1]{\boldsymbol{F_{#1}}}

\newcommand{\SQFPrior}[0]{\boldsymbol{p}}

\newcommand{\PriorEpsilon}[0]{\epsilon}

\newcommand{\PerturbationBasis}[0]{\boldsymbol{B}}

\newcommand{\Transposition}[0]{\top}

\newcommand{\GadgetParameters}[0]{\Psi}
\newcommand{\NeuralParameters}[0]{\theta}

\newcommand{\Critic}[1]{v_{#1}}

\newcommand{\KLDiv}[2]{D_{\text{KL}}({#1}, {#2})}
\newcommand{\BregmanDiv}[1]{B_{#1}}
\newcommand{\Regularizer}[0]{\psi}
\newcommand{\RegularizationWeight}[0]{\eta}
\newcommand{\Lagrangian}[0]{\mathcal{L}}

\newcommand{\MaxReward}[0]{\Rewards{}^{\text{max}}}
\newcommand{\MinReward}[0]{\Rewards{}^{\text{min}}}
\newcommand{\GameValue}[0]{V}

\begin{abstract}
Test-time reasoning has significantly improved performance in domains ranging from games to language models. However, test-time policy changes with formal guarantees on the performance of the resulting strategy remain a challenge in two-player zero-sum imperfect-information games. Existing solutions are limited to tabular methods or single gradient step updates. In this work, we investigate policy-gradient algorithms as a method for scalable test-time reasoning. We extend the concept of gadget game, tabular technique for test-time search, to the reinforcement learning setting. Unlike prior approaches, we represent the gadget game implicitly by modified sampling and neural policy rather then explicitly by constructing it, thereby removing constraints on subgame size. Furthermore, we formally prove that, unlike prior tabular algorithms, regularized policy-gradient algorithms limit possible strategy degradation caused by test-time reasoning, even without the gadget games. Our evaluation across small- and large-scale games confirms that additional test-time training often substantially improves performance relative to the blueprint strategy. 
\end{abstract}
    
\section{Introduction}  
Test-time reasoning enables algorithms to allocate limited computation to enhance performance within the context actually encountered during algorithm execution. This paradigm has been crucial to successes in game-playing \citep{silver2018alphazero,brown2018libartus,moravcik2017deepstack,brown2019pluribus,schmid2023studentofgames,zhang2026obscuro,sokota2025stratego,bakhtin2022diplomacy} and it has demonstrated significant gains in large language models  \citep{yang2024reasoning,kojima2022reasoning,snell2025reasoning}. 

In imperfect-information games, test-time reasoning is challenging. Since players cannot observe the underlying game state, they must reason across a distribution of possible states.  An optimal strategy at any given decision depends on the player's belief over these states, which is itself a function of the strategies employed by all players acting prior to that decision. Having any fixed belief regarding an opponent's past decisions has been shown to lead to highly exploitable strategies \citep{gilpin2006poker,burch2014cfrd,moravcik2016maxmargin}.

\textit{Gadget games} were developed to employ additional reasoning in tabular game solving settings. In this work we will focus on \textit{Bayesian gadget game}, which assumes both players followed some fixed strategy in the past and \textit{resolving gadget game} that does not require any fixed beliefs about opponents' strategies. By starting with a \textit{blueprint strategy} and refining it at each encountered decision point, resolving gadget provides a theoretical guarantee that the resulting strategy will be no more exploitable than the blueprint \citep{burch2014cfrd,moravcik2016maxmargin,brown2017reachmaxmargin}. We call methods with this guarantee ``safe'' test-time reasoning, and techniques without it ``unsafe''. Bayesian gadget game is an unsafe technique.

To date, gadget games and test-time reasoning have been largely restricted to tabular approaches. However, policy-gradient algorithms have emerged as a strong non-tabular alternative for game-playing, and were the first to achieve expert-level and then superhuman performance in the large, imperfect-information game Stratego \citep{perolat2023stratego,sokota2025stratego}. The superhuman Stratego bot Ataraxos used the ``update-equivalence framework'', a test-time reasoning technique that simulates a single policy-gradient step~\citep{sokota2024update,sokota2025stratego}. While this ensures that the strategy's exploitability strategy does not degrade by more than a certain bound, it also limits potential gains by restricting reasoning to a single update.

In this work, we propose methods for test-time reasoning using policy-gradient algorithms that avoid the explicit construction of a gadget game. However, resolving gadget game introduces new actions to the game. We use a small temporary neural network to represent the strategy in these new action nodes.
Unlike some prior approaches that used policy gradient during training, and then solved explicitly constructed gadget games using tabular algorithms \citep{kubicek2024lookahead,kubicek2025lookahead}, our method employs the same algorithm for both training and test-time reasoning, significantly reducing design complexity and improving the scalability.
To address the challenge of sampling the initial states of the subgame from the distribution required by gadget games, we train an additional transformer in train time.

Additionally, we demonstrate that safety guarantees similar to those of the update-equivalence framework can be maintained beyond a single gradient step, provided the regularization policy within the policy-gradient remains constant. This effectively bounds both the potential improvement or degradation of the policy.

We evaluate our proposed approaches both in small benchmark games, where exploitability can be precisely computed, and in large-scale games where tabular methods are computationally infeasible. 
In small games, we verify that simulating gadget games substantially improves weaker blueprints and only negligibly degrades strong blueprints due to sampling noise.
In large games, we demonstrate that using test time reasoning against an opponent without it leads to a win rate of over 80\% in Battleship.
\section{Background}
The two-player zero-sum factored-observation stochastic game (FOSG) is a tuple $\FOSGame = (\Players,\WorldStates, \InitWorldState, \PlayerFunction, \Actions{}, \Transitions, \Rewards{1},  \Observations{})$. Here, $\Players = (1, 2, \ChancePlayer)$ is the set of players, including a chance player $\ChancePlayer$, $\WorldStates$ represents the set of underlying world states, and $\InitWorldState \in \WorldStates$ is the initial state. The player function $\PlayerFunction: \WorldStates \to 2^{\Players}$ identifies which players are active in a given state. $\Actions{} = \Actions{1} \times \Actions{2} \times \Actions{c}$ denotes the set of joint actions, where $\Actions{\Player}(\WorldState)$ is the set of legal actions for player $\Player$ in state $\WorldState$. Transitions between world states are modeled by $\Transitions: \WorldStates \times \Actions{} \to \WorldStates$ and rewards of player 1 by $\Rewards{1} : \WorldStates \times \Actions{} \to \RealNumbers$. In zero-sum games, the rewards given to player 2 are $\Rewards{2} := -\Rewards{1}$. The observation function $\Observations{}: \WorldStates \times \Actions{} \times \WorldStates \to \ObservationSet{}$ provides observations for all players. The observation set $\ObservationSet{} = \ObservationSet{\PublicIndex} \times \ObservationSet{1} \times \ObservationSet{2}$ consists of public observations $\ObservationSet{\PublicIndex}$ and player-specific private observations $\ObservationSet{\Player}$. Similarly, the observation function can be factored as $\Observations{} = (\Observations{\PublicIndex}, \Observations{1}, \Observations{2})$ \citep{kovarik2022fosg}.

A history $\History = \WorldState^0 \Action{}^0 \dots \Action{}^{\TrajectoryLength-1} \WorldState^{\TrajectoryLength} \in (\WorldStates \Actions{})^* \WorldStates$ is a finite sequence starting in $\WorldState^0 = \InitWorldState$ and at each timestep $\Timestep \leq \TrajectoryLength-1$ the transitions are consistent with the game rules $\WorldState^{\Timestep + 1} = \Transitions(\WorldState^{\Timestep}, \Actions{}^{\Timestep})$. We will use $\Histories$ to denote the set of all possible histories in the game. For any 2 histories $\History, \History' \in \Histories$, if $\History$ is a prefix of $\History'$, we will use following notation $\History \HistoryExtend \History'$. Every history ends with a uniquely identifiable state, so we will sometimes use history instead of the world state, such as $\Rewards{1}(\History, \Action{}) := \Rewards{1}(\WorldState^{\TrajectoryLength}, \Action{})$ and $\InitHistory := \InitWorldState$ will be the initial history. In imperfect-information games, players cannot observe the world state directly. Instead, histories that are indistinguishable to the player $\Player$ are grouped into an information set (infoset) $ \Infoset{\Player} \in \Infosets{\Player}$. Moreover, the states that are indistinguishable from the perspective of the outside observer are in the same public state $\PublicState \in \PublicStates$. A different perspective is that $\Infoset{\Player}$ is a collection of states in which the player $\Player$ has the same private information, and $\PublicState$ is a collection of states in which the public information is the same. The sets of all information sets and public states are $\Infosets{\Player}$ and $\PublicStates$, respectively. We will overload the notation and use $\Histories(\Infoset{\Player})$ to denote all histories that share the information set, $\Infosets{\Player}(\PublicState)$ to denote all information sets that share the public state, $\Infoset{\Player}(\History)$ to denote infoset or public state that corresponds to history $\History$. Since the player cannot distinguish states within the same infoset, it must have the same legal actions in every such state. We will use $\Actions{\Player}(\Infoset{\Player}) := \Actions{\Player}(\WorldState)$, such that $\Infoset{\Player}$ contains $\WorldState$.

A behavioral strategy $\Strategy{\Player} : \Infosets{\Player} \to \Simplex \Actions{\Player}$ of player $\Player$ maps the information set to the probability distribution over legal actions. We will also use $\Strategy{\Player}(\Infoset{\Player}, \Action{\Player})$ to denote the probability of playing action $\Action{\Player}$ under strategy $\Strategy{\Player}$. The joint strategy profile is then $\Strategy{} = (\Strategy{1}, \Strategy{2}, \Strategy{\ChancePlayer})$, where $\Strategy{\ChancePlayer}$ is always fixed by the game rules. Consider two histories $\History, \History'$, such that $\History \HistoryExtend \History'$. The probability that the history $\History'$ is reached under $\Strategy{}$, from $\History$ is $\Reach{\Strategy{}}(\History' | \History) = \prod_{\History \HistoryExtend \History'' \Action{} \WorldState \HistoryExtend \History'} \prod_{\Player \in \Players} \Strategy{\Player}(\Infoset{\Player}(\History''), \Action{\Player})$. This probability is often called reach. This can be decomposed into individual player contributions $\PlayerReach{\Player}{\Strategy{}}(\History' | \History) = \prod_{\History \HistoryExtend \History'' \Action{} \WorldState \HistoryExtend \History'} \Strategy{\Player}(\Infoset{\Player}(\History''), \Action{\Player})$. We will use a standard notation where $-\Player$ denotes all players except player $\Player$. We also use $\Reach{\Strategy{}}(\History) = \Reach{\Strategy{}}(\History | \InitHistory)$.

The expected utility for player $\Player$ under strategy $\Strategy{}$ is $\Utility{\Player}^{\Strategy{}}(\History) = \sum_{\History \HistoryExtend \History' \Action{}} \Reach{\Strategy{}}(\History'|\History)\Rewards{\Player}(\History', \Action{}) \prod_{\OtherPlayer \in \Players} \Strategy{\OtherPlayer}(\Infoset{\OtherPlayer}(\History'), \Action{\OtherPlayer})$. The counterfactual utility of the information set  $\CounterfactualValue{\Player}{\Strategy{}}(\Infoset{\OtherPlayer}) = \frac{\sum_{\History \in \Histories(\Infoset{\OtherPlayer})} \PlayerReach{-\OtherPlayer}{\Strategy{}}(\History) \Utility{\Player}^{\Strategy{}}(\History)}{\sum_{\History \in \Histories(\Infoset{\OtherPlayer})} \PlayerReach{-\OtherPlayer}{\Strategy{}}(\History)}$ is the sum of expected utilities across all histories in an infoset, weighted by the probability that those histories are reached assuming player $\OtherPlayer$ played to reach them. The counterfactual action utility $\CounterfactualActionValue{\Player}{\Strategy{}}(\Infoset{\OtherPlayer}, \Action{\OtherPlayer})$ is defined analogously to the counterfactual utility, but for infoset-action pairs. A best response $\BRStrategy{\Player} \in \BRSet{\Player}(\Strategy{-\Player})$ maximizes utility against fixed opponent strategy $\BRStrategy{\Player} = \argmax_{\Strategy{\Player}} \Utility{\Player}^{(\Strategy{\Player}, \Strategy{-\Player})}(\InitHistory)$. A strategy profile where all players play a best response is a Nash equilibrium $\Nash{}$, which is a sought-after solution concept in two-player zero-sum games. Exploitability $\Exploitability(\Strategy{\Player}) = \Utility{-\Player}^{\Strategy{\Player}, \BRStrategy{-\Player}} (\InitHistory) - \Utility{-\Player}^{\Nash{}}(\InitHistory)$ measure a strategy's distance from the Nash equilibrium. It is always non-negative, and it is 0 iff the $\Strategy{\Player}$ is part of the Nash equilibrium.

\subsection{Subgame solving}
A subgame $\Subgame{\GeneralSubgameId}$ is a restriction of the game $\FOSGame$, which starts from a collection of states $\WorldStates^{\GeneralSubgameId}$ , which correspond to histories $\Histories^{\GeneralSubgameId} \subseteq \Histories$ from the original game $\FOSGame$. Based on the choice of $\Histories^{\GeneralSubgameId}$, there are 2 main categories of subgame solving. \textit{Common-knowledge subgame solving} considers all histories consistent with the public state $\Histories^{\GeneralSubgameId} = \Histories(\PublicState)$~\citep{gilpin2006poker,ganzfried2015endgame,burch2014cfrd, moravcik2016maxmargin, brown2017reachmaxmargin, kubicek2026refinements}. \textit{Knowledge-limited subgame solving} (KLSS) considers only a subset of public state histories~\citep{zhang2021subgame, zhang2026obscuro, liu2023opponent, sokota2024update}. Original KLSS ``freezes'' the strategy at the boundary information sets to equal the probabilities from the blueprint. We use a \textit{knowledge-limited unfrozen subgame solving (KLUSS)}, which allows both players to optimize their strategies across all states within the subgame~\citep{zhang2026obscuro}. Specifically, we focus on 1-KLUSS, which considers histories consistent with the information set of a player $\Histories^{\GeneralSubgameId} = \Histories(\Infoset{\Player})$. 

The subgame in imperfect information games is not a well-defined game, as it does not start in a single state. In order to make it a well defined game, it is necessary to connect the disjointed initial states with some new states and actions. We will call these new states and actions that were not part of the original game as \textit{gadget} and the resulting game \textit{gadget game}.

\textit{Bayesian gadget game} creates a new initial state, where chance plays to all states in $\WorldStates^{\GeneralSubgameId}$ proportionally to reach probabilities $\Reach{\BlueprintStrategy{}}(\History)$ of some blueprint strategy $\BlueprintStrategy{}$. Although often effective \citep{gilpin2006poker,gilpin2007abstraction,ganzfried2015endgame,brown2017reachmaxmargin,kubicek2026refinements}, this approach can lead to high exploitability if the opponent does not play the strategy assumed in the blueprint even if $\BlueprintStrategy{}$ is a Nash equilibrium \citep{burch2014cfrd}. \textit{Resolving gadget game} address this by providing the opponent with a hypothetical choice to modify their previous strategy, ensuring that the resulting solution is not more exploitable than the original blueprint \citep{burch2014cfrd,moravcik2016maxmargin,brown2017reachmaxmargin}. 

\subsection{Policy-gradient algorithms for games}
Policy-gradient algorithms have shown remarkable performance in very large games like Stratego \citep{perolat2023stratego,sokota2025stratego}. To ensure convergence in two-player zero-sum games, those methods employ KL regularization with respect to some \textit{magnet} (regularization) strategy $\Strategy{}^M$. In Regularized Nash Dynamics the regularized rewards are

\begin{equation}
\label{eq:regularized_reward}
    \Rewards{\Player}^R(\History, \Action{}) = \Rewards{\Player}(\History, \Action{}) - \RegularizationWeight \log(\frac{\Strategy{\Player}(\Infoset{\Player}(\History),\Action{\Player})}{\Strategy{\Player}^M(\Infoset{\Player}(\History),\Action{\Player})}) + \RegularizationWeight \log(\frac{\Strategy{-\Player}(\Infoset{-\Player}(\History),\Action{-\Player})}{\Strategy{-\Player}^M(\Infoset{-\Player}(\History),\Action{-\Player})})
\end{equation}

The corresponding optimization objective for behavioral strategies as defined by \citep{perolat2023stratego} is
\begin{equation}
\label{eq:reg_objective}
    \Strategy{\Player}^t(\Infoset{\Player}, \Action{\Player}) = \argmax_{\Strategy{\Player}} \sum_{\Action{\Player}} \Strategy{\Player}(\Infoset{\Player}, \Action{\Player}) q^{\Strategy{}^{t-1}}_{\Player}(\Infoset{\Player}, \Action{\Player}) - \eta \KLDiv{\Strategy{\Player}}{\Strategy{\Player}^M}
\end{equation}

A higher regularization strength ($\RegularizationWeight$) forces the strategy to remain closer to the magnet. This regularization alters the Nash equilibrium of the game based on the magnet strategy. It has been shown that by either changing the magnet strategy to slowly follow the computed strategy 
or by annealing the $\RegularizationWeight$, these algorithms eventually converge to a Nash equilibrium when used on normal-form or sequence-form strategies.  Like prior theoretical results \citep{perolat2021poincare, sokota2022mmd}, our results in \cref{thm:fixed_magnet_exploitability,thm:fixed_magnet_kl_divergence} apply to normal-form and sequence-form strategies. Still, the regularized policy-gradients exhibit surprisingly strong performance in large-scale games like Stratego even when applied to behavioral strategies~\citep{sokota2025stratego,perolat2023stratego}. Recently, it has been shown that these convergence guarantees hold even in behavioral strategies settings \citep{kalogiannis2026policy}. 

\section{Safe imperfect-information subgame solving}
\label{sec:example}

In imperfect-information games, subgame solving is difficult because player strategies are interdependent across the entire game. We illustrate this challenge using a modified version of Matching Pennies, in which Player 1 receives a double reward if both players select Tails. This game has a single unique Nash equilibrium if both players play $\Strategy{}(H) = \frac{2}{3}$.

\cref{fig:mp} shows the sequential form of this game, where Player 2 acts first without revealing the choice to Player 1. Consider Bayesian gadget game for Player 1's decision with 2 nodes as illustrated in \cref{fig:mp_subgame}. The optimal strategy in this subgame depends on the belief $p$ that the opponent played Heads.  If $p > \frac{2}{3}$, Player 1 should always play Heads; If $p < \frac{2}{3}$, then Tails is optimal. Lastly, if $p = \frac{2}{3}$, then regardless of the strategy the Player 1 chooses, it will always have the same expected utility $\frac{2}{3}$, which means that any strategy of Player 1 in the Bayesian gadget is optimal.

\begin{figure*}[!h]
    \centering    
    \begin{subfigure}[b]{0.32\textwidth}
        \centering
        \begin{tikzpicture}[
    level 1/.style={sibling distance=2.5cm, level distance=1cm},
    level 2/.style={sibling distance=1cm, level distance=1cm},
    p2node/.style={ 
      regular polygon,
      regular polygon sides=3,
      rotate=180, 
      draw,
      fill=blue!40,
      inner sep=1pt,
      minimum size=15pt
    },
    p1node/.style={
      regular polygon,
      regular polygon sides=3, 
      draw,
      fill=red!40,
      inner sep=1pt,
      minimum size=15pt
    },
    payoff/.style={
      font=\normalsize
    }
  ]

  \node [p2node] (root) {}
    child {
      node [p1node] (p1_left) {} 
      child { node [payoff] {1}  edge from parent node [left] {H} } 
      child { node [payoff] {0} edge from parent node [right] {T} }
      edge from parent node [above left] {H} 
    }
    child {
      node [p1node] (p1_right) {} 
      child { node [payoff] {0}  edge from parent node [left] {H} } 
      child { node [payoff] {2} edge from parent node [right] {T} }
      edge from parent node [above right] {T} 
    };

  \draw [red, dashed, thick] (p1_left) -- (p1_right)
        node [midway, above, yshift=2mm, fill=white, inner sep=1pt, font=\small] {$\Infoset{1}$};

\end{tikzpicture}
        \caption{Original game}
        \label{fig:mp}
    \end{subfigure}
    \hfill 
    \begin{subfigure}[b]{0.32\textwidth}
        \centering
        \begin{tikzpicture}[
    level 1/.style={sibling distance=2.5cm, level distance=1cm},
    level 2/.style={sibling distance=1cm, level distance=1cm},
    p2node/.style={ 
      regular polygon,
      regular polygon sides=3,
      rotate=180, 
      draw,
      fill=blue!40,
      inner sep=1pt,
      minimum size=15pt
    },
    p1node/.style={
      regular polygon,
      regular polygon sides=3, 
      draw,
      fill=red!40,
      inner sep=1pt,
      minimum size=15pt
    },
    chancenode/.style={circle, draw, fill=gray!30, minimum size=5mm, inner sep=1pt},
    payoff/.style={
      font=\normalsize
    }
  ]

  \node [chancenode] (root) {}
    child {
      node [p1node] (p1_left) {} 
      child { node [payoff] {1}  edge from parent node [left] {H} } 
      child { node [payoff] {0} edge from parent node [right] {T} }
      edge from parent node [above left] {$p$} 
    }
    child {
      node [p1node] (p1_right) {} 
      child { node [payoff] {0}  edge from parent node [left] {H} } 
      child { node [payoff] {2} edge from parent node [right] {T} }
      edge from parent node [above right] {$1-p$} 
    };

  \draw [red, dashed, thick] (p1_left) -- (p1_right)
        node [midway, above, yshift=2mm, fill=white, inner sep=1pt, font=\small] {$\Infoset{1}$};

\end{tikzpicture}
        \caption{Bayesian gadget game}
        \label{fig:mp_subgame}
    \end{subfigure}
    \hfill 
    \begin{subfigure}[b]{0.32\textwidth}
        \centering
        \begin{tikzpicture}[
    level 1/.style={sibling distance=2.5cm, level distance=1cm},
    level 2/.style={sibling distance=1cm, level distance=1cm},
    level 3/.style={sibling distance=1cm, level distance=1cm},
    p2node/.style={ 
      regular polygon,
      regular polygon sides=3,
      rotate=180, 
      draw,
      fill=blue!40,
      inner sep=1pt,
      minimum size=15pt
    },
    p1node/.style={
      regular polygon,
      regular polygon sides=3, 
      draw,
      fill=red!40,
      inner sep=1pt,
      minimum size=15pt
    },
    chancenode/.style={circle, draw, fill=gray!30, minimum size=5mm, inner sep=1pt},
    terminal/.style={inner sep=2pt},
    payoff/.style={
      font=\normalsize
    },
    edge_label/.style={midway, fill=white, inner sep=1pt}
]

\node[chancenode] (S) {} 
    child {
        node[p2node] (P2-L) {}
        child[grow=-150] { 
            node[payoff] (T1) {0.5}
            edge from parent
            node[edge_label, above left] {T}
        }
        child[grow=-90] { 
            node[p1node] (P1-L) {}
            child {
                node[payoff] (T3) {1}
                edge from parent
                node[edge_label, above left] {H}
            }
            child {
                node[payoff] (T4) {0}
                edge from parent
                node[edge_label, above right] {T}
            }
            edge from parent
            node[edge_label, right] {C}
        }
        edge from parent
        node[edge_label, above left] {1.0}
    }
    child {
        node[p2node] (P2-R) {}
        child[grow=-150] { 
            node[payoff] (T5) {1}
            edge from parent
            node[edge_label, above left] {T}
        }
        child[grow=-90] { 
            node[p1node] (P1-R) {}
            child {
                node[payoff] (T7) {0}
                edge from parent
                node[edge_label, above left] {H}
            }
            child {
                node[payoff] (T8) {2}
                edge from parent
                node[edge_label, above right] {T}
            }
            edge from parent
            node[edge_label, right] {C}
        }
        edge from parent
        node[edge_label, above right] {1.0}
    };
  \draw [red, dashed, thick] (P1-R) -- (P1-L)
        node [midway, above left, yshift=2mm, fill=white, inner sep=1pt, font=\small] {$\Infoset{1}$};

\end{tikzpicture}
        \caption{Resolving gadget game}
        \label{fig:mp_resolve}
    \end{subfigure}
    \caption{Sequential form of Matching Pennies, where Player 2 acts first and the corresponding subgame and resolving gadget game of Player 1.}
    \label{fig:example} 
\end{figure*} 

The danger is that the opponent can choose a different strategy, such as playing H, which makes the different strategies of Player 1 in the real game lead to very different payoffs. Gadget games mitigate this risk by giving an opponent a choice to hypothetically select any strategy in the past. \textit{Resolving gadget} inserts an additional opponent's decisions before each root state of the subgame. These new decisions respect the opponent's information sets and give the opponent a choice to either continue into the subgame or terminate the game for a fixed reward. Crucially, the probability distribution across the root nodes is proportional only to the reach of resolving and the chance player \citep{burch2014cfrd}. 
 
If the termination reward is the opponent's counterfactual best response value to some blueprint strategy $\BlueprintStrategy{}$, then every equilibrium of this gadget game is guaranteed to be no more exploitable in the full game than the blueprint. Moreover, solutions to gadget games have been empirically shown to often improve the blueprint \citep{burch2014cfrd,brown2017reachmaxmargin,brown2018libartus,moravcik2017deepstack,kubicek2026refinements}.

\begin{restatable}[]{theorem}{GadgetValue}
\label{thm:gadget_value}
\citep{moravcik2017deepstack,burch2014cfrd,kubicek2026refinements} Consider a common-knowledge subgame $\Subgame{\PublicState}$ of game $\FOSGame$ starting in public state $\PublicState$. Let $\BlueprintStrategy{\Player}$ be a blueprint strategy and $\GadgetGame{\Player}$ be a player $\Player$'s resolving gadget game corresponding to the $\Subgame{\PublicState}$, where the terminate values correspond to the best response values to $\BlueprintStrategy{\Player}$. Let $\Nash{}$ be a Nash equilibrium of $\GadgetGame{\Player}$ and $\Strategy{\Player}'$ strategy that plays $\Nash{\Player}$ in a subgame $\Subgame{\PublicState}$ and blueprint $\BlueprintStrategy{}$ everywhere else. Then 
\[
    \Exploitability(\Strategy{\Player}') \leq \Exploitability(\BlueprintStrategy{\Player})
    \]
\end{restatable}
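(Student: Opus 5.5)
Write $\Opponent=-\Player$ for the opponent. The plan is the standard resolving argument: show that the opponent's best-response counterfactual values at the roots of the subgame do not increase when $\BlueprintStrategy{\Player}$ is replaced by $\Strategy{\Player}'$. Then show that this local bound propagates up to the root of $\FOSGame$.

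\textbf{Step 1 (decomposing the best response).} I first write the opponent's best-response value against any strategy of player $\Player$ that agrees with $\BlueprintStrategy{\Player}$ outside $\Subgame{\PublicState}$. The opponent's infosets that enter $\PublicState$ are the sets $\Infosets{\Opponent}(\PublicState)$. Because $\PublicState$ is a public state, every history of $\FOSGame$ either never reaches $\PublicState$ or passes through exactly one $\Infoset{\Opponent} \in \Infosets{\Opponent}(\PublicState)$. Reaching $\PublicState$ is public information, so the opponent's behavior inside and outside the subgame can be chosen independently. Consequently, the best-response value at $\InitHistory$ is obtained from a best response in the trunk game. In that trunk game each $\Infoset{\Opponent}$ is replaced by a terminal payoff equal to the opponent's counterfactual best-response value $\BRValue(\Infoset{\Opponent})$ inside the subgame. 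Player $\Player$'s reach probabilities are weighted in exactly as in the definition of $\CounterfactualValue{\Opponent}{\Strategy{}}$. The trunk best-response value is monotone (non-decreasing) in these terminal payoffs. It therefore suffices to prove, for every $\Infoset{\Opponent} \in \Infosets{\Opponent}(\PublicState)$, that the counterfactual best-response value against $\Strategy{\Player}'$ is at most the one against $\BlueprintStrategy{\Player}$.

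\textbf{Step 2 (using the gadget).} In $\GadgetGame{\Player}$ the opponent, at each $\Infoset{\Opponent}$, chooses between terminating, which yields the blueprint best-response value $\BRValue(\Infoset{\Opponent})$, and continuing into $\Subgame{\PublicState}$. The root distribution is proportional to the chance and player-$\Player$ reaches only; it excludes the opponent's own reach. Hence the opponent's utility in the gadget under a strategy profile is
\[
\sum_{\Infoset{\Opponent}} w(\Infoset{\Opponent}) \max\bigl(\BRValue(\Infoset{\Opponent}),\, \CounterfactualValue{\Opponent}{\Strategy{\Player},\BRStrategy{\Opponent}}(\Infoset{\Opponent})\bigr)
\]
when the opponent best-responds, with positive weights $w$. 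Playing $\BlueprintStrategy{\Player}$ restricted to the subgame gives every continue value at most $\BRValue(\Infoset{\Opponent})$, by the definition of the best response. The opponent's gadget value is then exactly $\sum w\,\BRValue$. Since $\Nash{\Player}$ is an equilibrium (maximin) strategy for player $\Player$, the opponent's best-response value against $\Nash{\Player}$ is at most $\sum w\,\BRValue$. Each term is at least $w\,\BRValue(\Infoset{\Opponent})$ because of the terminate option. Therefore every term must be tight, which gives the per-infoset inequality $\CounterfactualValue{\Opponent}{\Nash{\Player},\BRStrategy{\Opponent}}(\Infoset{\Opponent}) \le \BRValue(\Infoset{\Opponent})$.

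\textbf{Step 3 (concluding).} Substituting into Step 1, the opponent's best-response value against $\Strategy{\Player}'$ is at most the value against $\BlueprintStrategy{\Player}$. Subtracting the game value gives $\Exploitability(\Strategy{\Player}') \le \Exploitability(\BlueprintStrategy{\Player})$.

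The main obstacle is Step 1: the counterfactual decomposition must be made rigorous within the FOSG formalism. This requires showing that the opponent's best response separates across the public-state boundary. It also requires showing that normalizing the gadget's root distribution by the chance and $\Player$ reach, rather than the full reach, exactly matches the counterfactual weighting. Infosets with zero reach need care; I would handle them by restricting to positive-weight infosets and noting that zero-weight ones contribute nothing to exploitability. Step 2 also silently requires that $\GadgetGame{\Player}$ be a two-player zero-sum game, so that the minimax theorem applies.
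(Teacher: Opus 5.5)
The paper gives no proof of its own and defers to \citep{burch2014cfrd,moravcik2017deepstack,kubicek2026refinements}. Your argument is the standard one from those sources, and it is correct. The gadget equilibrium forces the opponent's counterfactual best-response value at every positive-weight root infoset $\Infoset{\Opponent}\in\Infosets{\Opponent}(\PublicState)$ to be at most $\BRValue(\Infoset{\Opponent})$. The opponent's full-game best-response value is non-decreasing in these values, because the trunk weights are unchanged.

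One point to state explicitly is that $\BRValue(\Infoset{\Opponent})$ must be the counterfactual best-response value, maximized over the opponent's continuation at every root infoset. It cannot be read off an arbitrary global best response that plays suboptimally at infosets it never reaches. Only the former makes the blueprint's gadget value exactly $\sum w\,\BRValue$, which your Step 2 relies on.
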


In \cref{fig:mp_resolve} we show the resolving gadget of the Matching Pennies subgame. The initial chance node assigns a probability of reaching the state equivalent to the reach of Player 1 and the chance. Since neither Player 1 nor the chance player has acted, the reach is 1. It is possible to normalize the chance node so that the probabilities sum to 1 without changing the optimal play in that game, but we kept them unnormalized so they reflect the reach. The reward of the terminating action corresponds to the blueprint $\BlueprintStrategy{1}(H) = 0.5$.

\subsection{Subgame solving by implicit gadget game}
Prior work bridges subgame solving and policy-gradients by training auxiliary components to explicitly construct subgames, which are then solved using tabular methods, such as CFR \citep{kubicek2024lookahead,kubicek2025lookahead}. However, this explicit construction of the complete subgame is often not computationally feasible.

Instead of explicit construction, we represent the resolving gadget game of player $\Player$ implicitly by introducing a temporary actor $\Strategy{\GadgetParameters}$ with weights $\GadgetParameters$. This actor predicts the strategy for the initial gadget decision (terminate or continue). The training in the gadget game proceeds by first sampling the initial state proportionally to the reach of player $\Player$ and the chance player. Then, it samples the trajectory until the end of the game using the current strategy $\Strategy{\NeuralParameters'}$. The q-value of the continue action is the estimated counterfactual value of the sampled initial state. The q-value of the terminate action is the critic's estimate using the blueprint strategy $\Strategy{\NeuralParameters}$. These two q-values are then used to compute the policy-gradient update. Lastly, the loss used to train $\NeuralParameters'$ is scaled by the probability of the continue action to ensure an unbiased update based on the opponent's gadget decision.

\cref{thm:gadget_value} does not depend directly on the solver used, as long as it converges to a Nash equilibrium, and our approach only mirrors the gadget game's behavior. As a result, our technique is not limited to regularized policy-gradient algorithms and will be compatible with 
future 
policy-gradient algorithms that may offer better theoretical guarantees for behavioral strategies.

While the ``terminate'' reward should, in theory, be the counterfactual best response value, we use the critic's estimate of the value when both players follow the blueprint. Some prior works observed that this approach performs better in practice, because it is less conservative \citep{moravcik2017deepstack}.

In nested subgame solving, gadget safety requires that counterfactual values and state samples remain consistent with the strategies used in previous subgames. Consequently, safety is only preserved if each subsequent resolve utilizes the weights derived from the preceding resolve.

\subsection{Subgame solving with a fixed magnet}

The most successful applications of policy-gradients in imperfect-information games typically employ KL regularization, which prevents the strategy from deviating too far from a regularization strategy $\Strategy{}^M$, sometimes referred to as a magnet \citep{perolat2021poincare,perolat2023stratego,sokota2022mmd,sokota2025stratego}. This regularization offers a distinct form of safety, which is not present in tabular solvers like CFR: the regularization strength $\RegularizationWeight$ explicitly bounds how far the strategy can shift.

\begin{restatable}[]{theorem}{FixedMagnetKLDivergence}
\label{thm:fixed_magnet_kl_divergence}
Given a game $\FOSGame$ and 1-strongly convex regularizer $\Regularizer$ with respect to $||\cdot||$. Let $\Strategy{}^M$ be a magnet strategy profile with exploitability $\epsilon$, $\RegularizationWeight > 0$ be a regularization weight. When solving a game $\FOSGame$ with regularized rewards, the Bregman divergence between the Nash strategy profile of the regularized game and the magnet is at most
\[
    \BregmanDiv{\Regularizer}(\Strategy{1}^\RegularizationWeight, \Strategy{1}^M) + \BregmanDiv{\Regularizer}(\Strategy{2}^\RegularizationWeight, \Strategy{2}^M) \leq \frac{\epsilon}{\RegularizationWeight}
\]
\end{restatable}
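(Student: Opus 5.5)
We will work in the sequence-form (or normal-form) representation, where the regularized game is the saddle-point problem
\[
\max_{\Strategy{1}} \min_{\Strategy{2}} \; \Strategy{1}^\Transposition \SQFUtility \Strategy{2} - \RegularizationWeight \BregmanDiv{\Regularizer}(\Strategy{1}, \Strategy{1}^M) + \RegularizationWeight \BregmanDiv{\Regularizer}(\Strategy{2}, \Strategy{2}^M),
\]
with $\Strategy{\Player}$ ranging over compact convex strategy polytopes. This objective is strongly concave--strongly convex, so it has a unique saddle point $(\Strategy{1}^\RegularizationWeight, \Strategy{2}^\RegularizationWeight)$. We write $f(\Strategy{1},\Strategy{2}) := \Strategy{1}^\Transposition \SQFUtility \Strategy{2}$ for the unregularized utility of player 1. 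We will compare the saddle-point value with the magnet in two ways and then combine the two comparisons.

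The first step uses only the saddle-point inequalities, applied with the magnet as the deviating strategy. For player 1, deviating to $\Strategy{1}^M$ cannot improve the regularized objective against $\Strategy{2}^\RegularizationWeight$. Since $\BregmanDiv{\Regularizer}(\Strategy{1}^M,\Strategy{1}^M)=0$, this gives
\[
f(\Strategy{1}^\RegularizationWeight,\Strategy{2}^\RegularizationWeight) - \RegularizationWeight \BregmanDiv{\Regularizer}(\Strategy{1}^\RegularizationWeight,\Strategy{1}^M) \ge f(\Strategy{1}^M,\Strategy{2}^\RegularizationWeight).
\]
Symmetrically, player 2 deviating to $\Strategy{2}^M$ gives
\[
f(\Strategy{1}^\RegularizationWeight,\Strategy{2}^\RegularizationWeight) + \RegularizationWeight \BregmanDiv{\Regularizer}(\Strategy{2}^\RegularizationWeight,\Strategy{2}^M) \le f(\Strategy{1}^\RegularizationWeight,\Strategy{2}^M).
\]
In both inequalities the Bregman term of the player who does not deviate is unchanged, so it cancels from the two sides.

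The second step subtracts the first inequality from the second, which yields
\[
\RegularizationWeight\big(\BregmanDiv{\Regularizer}(\Strategy{1}^\RegularizationWeight,\Strategy{1}^M)+\BregmanDiv{\Regularizer}(\Strategy{2}^\RegularizationWeight,\Strategy{2}^M)\big) \le f(\Strategy{1}^\RegularizationWeight,\Strategy{2}^M) - f(\Strategy{1}^M,\Strategy{2}^\RegularizationWeight).
\]
To bound the right-hand side, we bound each term by a best response against the magnet:
\[
f(\Strategy{1}^\RegularizationWeight,\Strategy{2}^M) \le \max_{\Strategy{1}} f(\Strategy{1},\Strategy{2}^M), \qquad f(\Strategy{1}^M,\Strategy{2}^\RegularizationWeight) \ge \min_{\Strategy{2}} f(\Strategy{1}^M,\Strategy{2}).
\]
The difference of these two best-response values is exactly the exploitability (Nash gap) of the profile $\Strategy{}^M$, i.e. $\Exploitability(\Strategy{1}^M)+\Exploitability(\Strategy{2}^M)$, because the game-value terms cancel. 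Taking this sum as the meaning of "magnet profile with exploitability $\epsilon$" and dividing by $\RegularizationWeight$ gives the claimed bound $\epsilon/\RegularizationWeight$.

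The main obstacle is a matter of conventions rather than the core argument. First, the regularized reward \eqref{eq:regularized_reward} is written per step with KL terms, and it must be identified with the sequence-form saddle-point problem above, where $\BregmanDiv{\Regularizer}$ is the dilated or sequence-form KL divergence. Second, one must confirm that $\epsilon$ denotes the total gap of the profile, as opposed to a per-player exploitability, which would change the constant. Strong convexity of $\Regularizer$ is needed only for existence and uniqueness of the saddle point; it does not enter the bound itself.
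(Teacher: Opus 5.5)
Your proof is correct, and it obtains the key inequality by a different and more elementary route than the paper.

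The paper works with first-order conditions. It writes the variational-inequality conditions at the saddle point and rewrites the regularizer term $\langle \nabla\psi(\pi_i^\eta)-\nabla\psi(\pi_i^M),\,\pi_i-\pi_i^\eta\rangle$ using the Bregman three-point identity. It then sums the two players' inequalities and only afterwards substitutes $\pi=\pi^M$.

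You instead use the zeroth-order saddle-point inequalities $V(\pi_1^M,\pi_2^\eta)\le V(\pi_1^\eta,\pi_2^\eta)\le V(\pi_1^\eta,\pi_2^M)$ directly. In these, the non-deviating player's Bregman term cancels. Your final step, bounding $f(\pi_1^\eta,\pi_2^M)-f(\pi_1^M,\pi_2^\eta)$ by $\max_{\pi_1} f(\pi_1,\pi_2^M)-\min_{\pi_2} f(\pi_1^M,\pi_2)$, is identical to the paper's. Your reading of $\epsilon$ as the total gap of the profile matches the paper's $\mathcal{E}(\pi^M)$.

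What your route buys is economy. It needs only existence of a saddle point and $B_\psi(x,x)=0$. It uses no gradient of $\psi$ at $\pi^\eta$ and no three-point identity.

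What the paper's route buys is a strictly stronger intermediate statement. The variational inequality retains the extra term $-\eta B_\psi(\pi_i,\pi_i^\eta)$. As a result, the paper's inequality before substitution holds for every comparator $\pi$, and after setting $\pi=\pi^M$ it bounds the symmetrized quantity $\sum_i \bigl[B_\psi(\pi_i^\eta,\pi_i^M)+B_\psi(\pi_i^M,\pi_i^\eta)\bigr]$ by $\epsilon/\eta$. The paper then discards the reverse-divergence terms as nonnegative, so for the theorem as stated both arguments give exactly the same bound.
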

\begin{restatable}[]{theorem}{FixedMagnetExploitability}
\label{thm:fixed_magnet_exploitability}
Given a game $\FOSGame$ with rewards represented by a matrix $\Rewards{}$ and entropy as a regularizer $\Regularizer$. Let $\Strategy{}^M$ be a magnet strategy profile with exploitability $\epsilon$, $\RegularizationWeight > 0$ be a regularization weight. When solving a game $\FOSGame$ with regularized rewards, the exploitability of the Nash strategy profile of the regularized game in the unregularized game $\FOSGame$ is at most
\[
    \Exploitability(\Strategy{}^{\RegularizationWeight}) \leq \epsilon + ||\Rewards{}||_{\infty}  \sqrt{\frac{4 \epsilon}{\RegularizationWeight}} 
\]
\end{restatable}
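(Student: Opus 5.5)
I will treat the game in normal/sequence form as a bilinear saddle-point problem. The payoff is $x^\Transposition \Rewards{} y$, with $x = \Strategy{1}$ maximizing and $y = \Strategy{2}$ minimizing. The regularized game has payoff
\[
f(x,y) = x^\Transposition \Rewards{} y - \RegularizationWeight \BregmanDiv{\Regularizer}(x, \Strategy{1}^M) + \RegularizationWeight \BregmanDiv{\Regularizer}(y, \Strategy{2}^M).
\]
Its equilibrium $(\Strategy{1}^\RegularizationWeight,\Strategy{2}^\RegularizationWeight)$ exists and is unique by strong convexity–concavity. The proof has two steps. First, use the KL bound of \cref{thm:fixed_magnet_kl_divergence}, specialized to the entropy regularizer, so that the Bregman divergence is the KL divergence. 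Second, convert the KL bound to an $\ell_1$ bound with Pinsker's inequality and propagate it through the bilinear payoff.

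For completeness, here is how I would recover the divergence bound. Compare the saddle-point inequalities with the magnet.

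\begin{itemize}
\item Player 1 does not gain by deviating to $x = \Strategy{1}^M$, whose divergence term is zero:
\[
(\Strategy{1}^\RegularizationWeight)^\Transposition \Rewards{} \Strategy{2}^\RegularizationWeight - \RegularizationWeight \BregmanDiv{}(\Strategy{1}^\RegularizationWeight,\Strategy{1}^M) \ge (\Strategy{1}^M)^\Transposition \Rewards{} \Strategy{2}^\RegularizationWeight .
\]
\item The symmetric inequality holds for player 2.
\end{itemize}

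Adding the two gives
\[
\RegularizationWeight \bigl(\BregmanDiv{}_1+\BregmanDiv{}_2\bigr) \le (\Strategy{1}^\RegularizationWeight)^\Transposition \Rewards{} \Strategy{2}^M - (\Strategy{1}^M)^\Transposition \Rewards{} \Strategy{2}^\RegularizationWeight .
\]
The right-hand side is at most $\max_x x^\Transposition \Rewards{} \Strategy{2}^M - \min_y (\Strategy{1}^M)^\Transposition \Rewards{} y$, which is the (summed) exploitability $\epsilon$ of the magnet profile. Hence $\BregmanDiv{}_1+\BregmanDiv{}_2 \le \epsilon/\RegularizationWeight$.

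Next I apply Pinsker's inequality, $\|p-q\|_1 \le \sqrt{2\,\KLDiv{p}{q}}$, to each player. By Cauchy–Schwarz,
\[
\|\Strategy{1}^\RegularizationWeight-\Strategy{1}^M\|_1+\|\Strategy{2}^\RegularizationWeight-\Strategy{2}^M\|_1 \le \sqrt{2}\bigl(\sqrt{D_1}+\sqrt{D_2}\bigr) \le \sqrt{4(D_1+D_2)} \le \sqrt{4\epsilon/\RegularizationWeight}.
\]
This is exactly where the constant $4$ comes from.

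Then I bound the exploitability of $\Strategy{}^\RegularizationWeight$ in the unregularized game, using the sum-of-gaps definition consistent with $\epsilon$:
\[
\Exploitability(\Strategy{}^\RegularizationWeight) = \max_x x^\Transposition \Rewards{} \Strategy{2}^\RegularizationWeight - \min_y (\Strategy{1}^\RegularizationWeight)^\Transposition \Rewards{} y .
\]
For each fixed $x$,
\[
x^\Transposition \Rewards{} \Strategy{2}^\RegularizationWeight \le x^\Transposition \Rewards{} \Strategy{2}^M + \|\Rewards{}\|_\infty \|\Strategy{2}^\RegularizationWeight-\Strategy{2}^M\|_1 .
\]
This uses the Hölder-type bound $|x^\Transposition \Rewards{} (y-y')| \le \|\Rewards{}\|_\infty\|y-y'\|_1$, which holds when $x$ is a distribution and $\|\Rewards{}\|_\infty$ is read as the max-abs entry. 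The analogous bound holds for player 1's side. Taking the max and min, the exploitability of $\Strategy{}^\RegularizationWeight$ is at most that of $\Strategy{}^M$, namely $\epsilon$, plus $\|\Rewards{}\|_\infty$ times the summed $\ell_1$ distances. By the previous step this gives
\[
\Exploitability(\Strategy{}^{\RegularizationWeight}) \le \epsilon + \|\Rewards{}\|_\infty\sqrt{4\epsilon/\RegularizationWeight}.
\]

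The main obstacle is the normal-form versus sequence-form bookkeeping. In sequence form the strategies are realization plans rather than distributions. In that case Pinsker must be replaced by the strong convexity of the dilated entropy with respect to the $\ell_1$ norm over the treeplex; the modulus depends on the tree structure, so I would first state the proof for normal form. The matching Hölder bound also needs $\|\Rewards{}\|_\infty$ interpreted as the maximal absolute payoff. I would also fix the convention that exploitability of a profile is the sum of both players' gaps, so that both $\epsilon$ terms match without extra factors.
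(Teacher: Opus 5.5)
Your proof is correct and follows the paper's argument. Both invoke the divergence bound of \cref{thm:fixed_magnet_kl_divergence} with entropy as the regularizer, apply Pinsker and the H\"older bound with $\|\Rewards{}\|_\infty$ as the largest absolute payoff, and combine via $\sqrt{x}+\sqrt{y}\le\sqrt{2(x+y)}$. Your separate bounds on the max and the min are equivalent to the paper's add-and-subtract step with best responses to $\Strategy{}^{\RegularizationWeight}$ held fixed. Your derivation of the divergence bound compares saddle-point values against the deviation to the magnet, which is more elementary than the paper's variational-inequality and three-point argument. You also flag, unlike the paper, that Pinsker applies directly only in normal form.
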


Proofs are provided in \cref{app:proof}. \cref{thm:fixed_magnet_kl_divergence,thm:fixed_magnet_exploitability} describe the maximal possible strategy change for a fixed magnet $\Strategy{}^M$. Specifically, \cref{thm:fixed_magnet_exploitability} describes the change in exploitability when the algorithm is used in the entire game. When the fixed magnet is used in a Bayesian subgame, the exploitability is bound in that subgame, but when the same strategy is used in the rest of the game, the exploitability can increase beyond this bound. However, in practice, a strong blueprint $\BlueprintStrategy{}$ often implies low exploitability within each individual Bayesian subgame, and that bounds how much the strategy can change. So when the blueprint is near-optimal, the strategy changes in the subgame are outweighed by the KL penalty. Consequently, even when using the unsafe techniques, the strategy does not converge too far from the magnet.

The update-equivalence framework offers a similar perspective by simulating a single step of the policy gradient algorithm at each decision \citep{sokota2024update}. However, the change in exploitability in the update-equivalence framework is limited by the step size. In contrast, we show that this type of safety is maintained even upon convergence, provided the magnet is fixed. Still, both approaches may increase exploitability in the worst case.

Another surprising implication of \cref{thm:fixed_magnet_kl_divergence} is that it bounds the strategy change both in common-knowledge and knowledge-limited subgame solving. This is unlike the safety guaranteed by gadget games, which requires additional constraints to be extended to KLSS \citep{zhang2021subgame,liu2023opponent}.
\section{Generating states}
While the previous sections focus on strategy optimization, those methods require sampling states proportional to a specific probability distribution. As strategies are updated during subsequent resolves, this distribution shifts, requiring a belief model that dynamically changes along the strategy.

Prior approaches typically relied on explicit state enumeration using game rules \citep{gilpin2006poker,ganzfried2015endgame,brown2018depth,moravcik2017deepstack,brown2020rebel,schmid2023studentofgames,zhang2026obscuro,kubicek2024lookahead} or some form of abstraction that could be both trained or predefined \citep{brown2018libartus,kubicek2025lookahead,solinas2025neural}. More recently, belief models have been trained to sample states using a fixed strategy and potentially domain-specific knowledge \citep{sokota2024update,sokota2025stratego}. We propose a general approach: a generative model that learns to sample subgame states conditioned on available information.

Our belief model utilizes a decoder-only Transformer \citep{vaswani2017attention}. Rather than predicting a state representation directly, the model auto-regressively generates sequences of actions, including chance events, that lead to a state. Each action is represented as a discrete token. To ensure samples are consistent with the current context (information), we prepend either the infoset $\Infoset{\Player} $ (for KLSS), or the public state $\PublicState$ (for common-knowledge) as the initial token. The model then predicts the probability distribution over subsequent actions in the sequence.

The training target for the belief model depends on the subgame solving type. When using Bayesian gadget, the model must sample states proportional to $\Reach{\Strategy{\NeuralParameters}}(\WorldState | \Infoset{})$, the probability of reaching state $\WorldState$ conditioned on the $\Infoset{}$, which is either an information set or public state, under strategy $\Strategy{\NeuralParameters}$. We train this model by sampling trajectories using $\Strategy{\NeuralParameters}$ and applying standard cross-entropy loss to the generated action sequence.

For a resolving gadget of player $\Player$, the model should sample states proportionally to $\PlayerReach{\Player}{\Strategy{\NeuralParameters}}(\WorldState | \Infoset{}) \PlayerReach{\ChancePlayer}{\Strategy{\NeuralParameters}}(\WorldState | \Infoset{})$. Since standard trajectory sampling follows full reach $\Reach{\Strategy{\NeuralParameters}}(\WorldState | \Infoset{}) = \PlayerReach{1}{\Strategy{\NeuralParameters}}(\WorldState | \Infoset{}) \PlayerReach{2}{\Strategy{\NeuralParameters}}(\WorldState | \Infoset{}) \PlayerReach{\ChancePlayer}{\Strategy{\NeuralParameters}}(\WorldState | \Infoset{})$, which includes the opponent's $\OtherPlayer$ strategy, we apply the importance sampling correction $\frac{1}{\PlayerReach{\OtherPlayer}{\Strategy{\NeuralParameters}}(\WorldState | \Infoset{})}$ to remove bias from the samples of the opponent's specific strategy.

In a nested subgame-solving setting, the strategy is updated at every encountered decision point, while respecting past changes. This means that even the belief model needs to reflect those changes. During subgame solving, we train the belief model using the policy-gradient algorithm, ensuring it reflects changes in the strategy. Failing to do this joint training would result in a belief-strategy mismatch, where the strategy is optimized against an outdated distribution, potentially leading to significant bias in long games.
\section{Experiments}
We have evaluated various subgame solving techniques using Regularized Nash Dynamics (RNaD) with Neural Replicator Dynamics as the policy-gradient algorithm. The goal was to assess how these techniques influence exploitability in small-scale games and how they scale to large environments. Detailed information is in \cref{app:exp_details}

\subsection{One-shot games}
\label{sec:tiny_games}

To empirically verify  \cref{thm:fixed_magnet_exploitability,thm:gadget_value}, we solved subgames using either the Bayesian or the gadget approach. For Bayesian solving, we compared two configurations: either a moving magnet, where the magnet moves the same way as during full-game training, or a fixed magnet, aligning with constraints in \cref{thm:fixed_magnet_exploitability}. We tested these on two normal-form games, biased Matching Pennies (\cref{fig:mp}) and Rock-Paper-Scissors. We assumed that the players move sequentially in this game, as in the example in \cref{fig:mp}, and, as a result, each player has a single subgame. In both cases, the blueprint strategy was close to the Nash equilibrium $\Exploitability(\BlueprintStrategy{}) < 0.02$, which is the hardest setting for the subgame solving, because any small noise results in performance degradation. Consider a situation where the blueprint assigns a slightly higher probability to playing Paper than to other actions. Then the Bayesian solving would converge to a strategy that mixes between Scissors and Paper, which in turn can be exploited by playing Scissors.

Results in \cref{fig:tiny_games} confirm that a fixed magnet prevents the strategy from diverging too far from the equilibrium. In contrast, when moving the magnet, the strategy will gradually converge to the best response, which is exploitable in the full game. When using the gadget game, exploitability first increases before stabilizing. This happens because the gadget actor is initialized randomly and must first learn the strategy for the gadget decision. However, after this strategy is learned, the exploitability remains low, though it slightly increases compared to the blueprint. This increase is due to the approximation noise in the critic's predicted utilities for the gadget's ``terminate'' action.

\begin{figure*}[!h]
    \centering    
    \begin{subfigure}[b]{0.495\textwidth}
        \centering
    \includegraphics[width=0.99\linewidth]{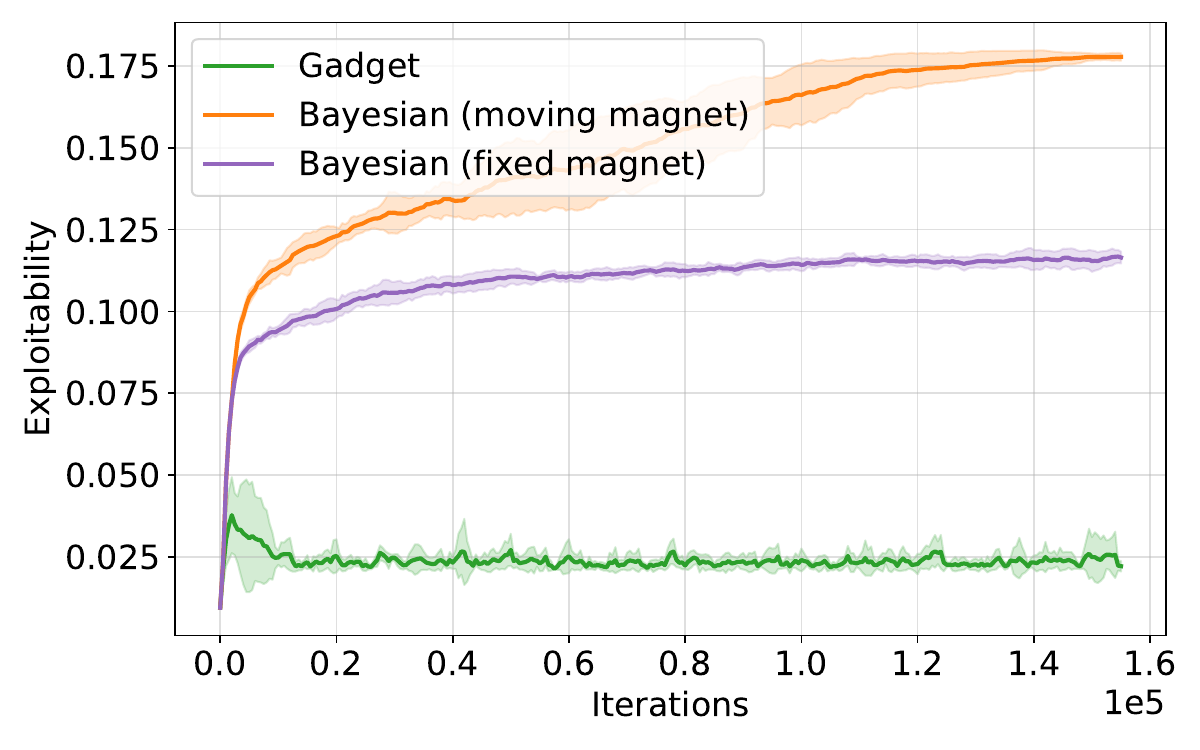}
        \caption{Biased matching Pennies}
        \label{fig:tiny_bmp}
    \end{subfigure}
    \hfill 
    \begin{subfigure}[b]{0.495\textwidth}
        \centering
    \includegraphics[width=0.99\linewidth]{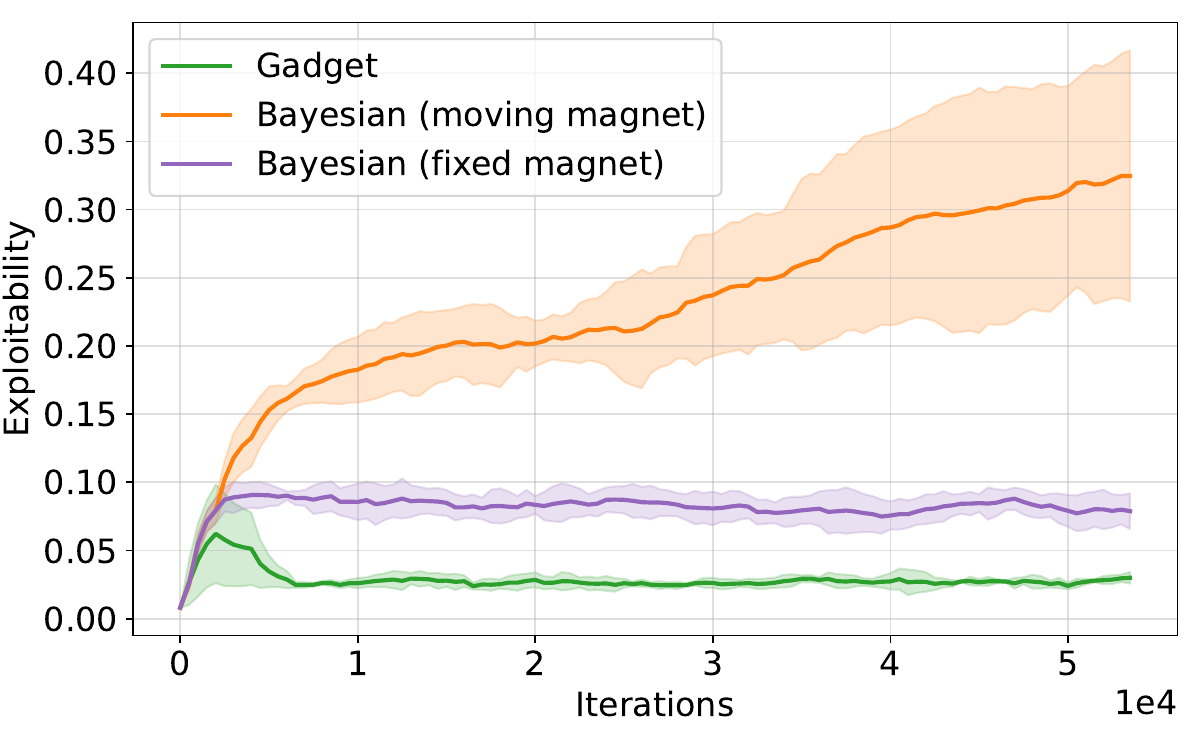}
        \caption{Rock-Paper-Scissors}
        \label{fig:tiny_rps}
    \end{subfigure} 
    \caption{Exploitability with 95\% confidence intervals in one-shot games, when solving a single subgame with different techniques using the same policy-gradient algorithm as the solver.}
    \label{fig:tiny_games} 
\end{figure*}  




\subsection{Small games}
\label{sec:small_games}

We next evaluated exploitability across full games by applying test-time reasoning at every decision point and then merging all subgame strategies into a single strategy for the whole game. We compared solutions of resolving gadget, common-knowledge Bayesian gadget, KLSS Bayesian gadget, and the Update Equivalence Framework (UEF), the method employed by the superhuman Stratego bot Ataraxos \citep{sokota2025stratego}. Since UEF is time-oblivious, we slightly modified it to use the provided time to collect traces, thereby obtaining a better estimate of Q-values for its update. When solving the Bayesian subgames, we kept the magnet fixed to be in line with \cref{thm:fixed_magnet_kl_divergence}.

Since later reasoning depends on the strategy from earlier subgames, we traversed the game using breadth-first search, and at each subgame we used the weights from the corresponding decision in the previous layer. We tested three standard benchmark games: Imperfect Information Goofspiel played with 5 cards, Leduc hold'em, and Battleship on 2x2 board with single ship of size 2, using three distinct blueprints for each. The first blueprint corresponds to the strategy near initialization of the weights, the second to the middle of the training, which will be the most common type of blueprint in large games, and the third, which is the least exploitable blueprint, to which the Regularized Nash Dynamics converged during training.

As shown in \cref{fig:small_games}, all techniques improve upon the blueprint when the blueprint is highly exploitable. Surprisingly, resolving gadget outperform other techniques with these blueprints, which is in contrast to the tabular settings, where Bayesian gadget is often superior \citep{brown2017reachmaxmargin,kubicek2026refinements}. This advantage stems partially from the fixed-magnet constraint in Bayesian gadget and the fact that RNaD naturally encourages non-zero action probabilities, which has recently been shown to converge to stronger equilibria in gadget games \citep{kubicek2026refinements}.

\begin{figure}
    \centering
    \includegraphics[width=0.99\linewidth]{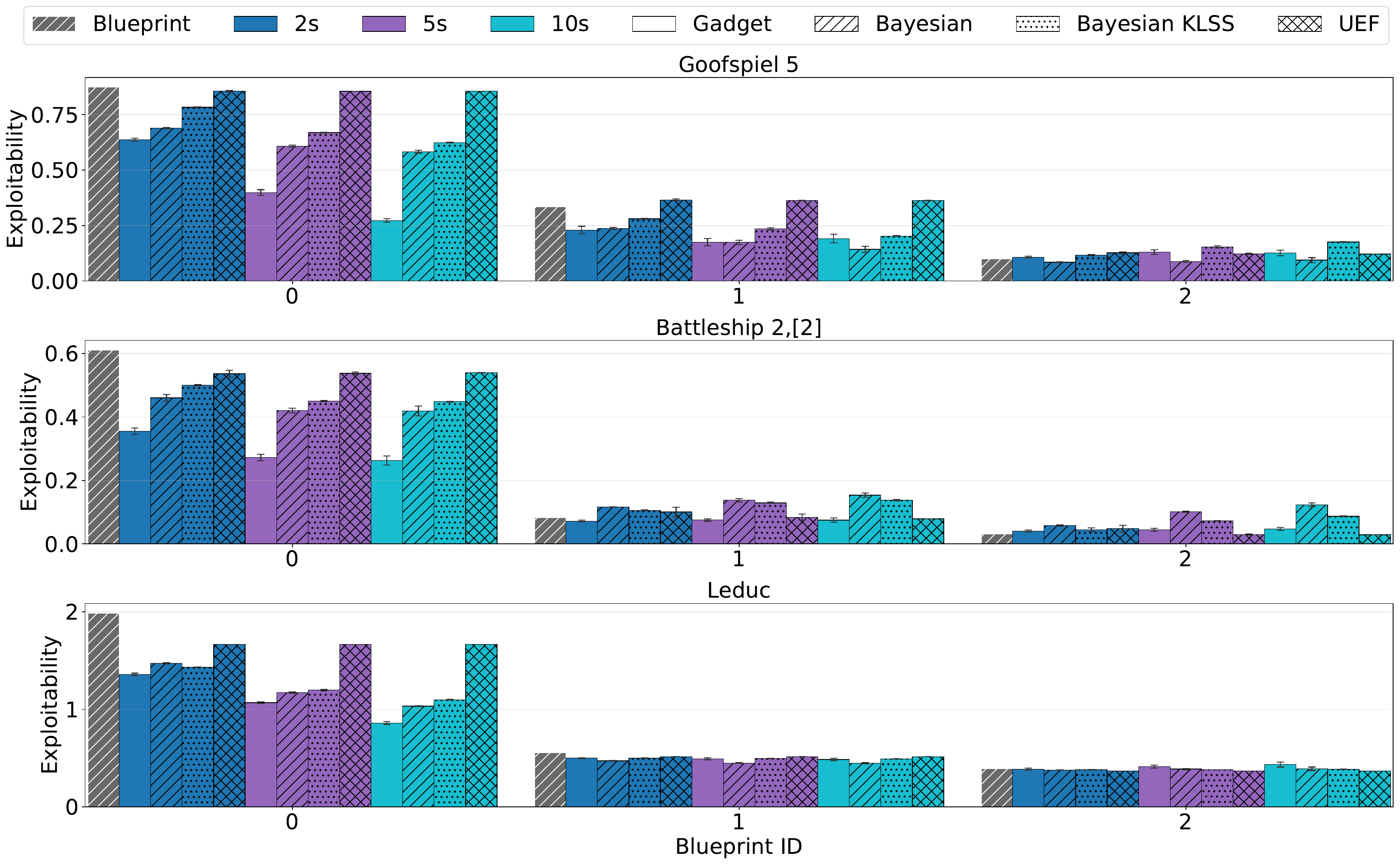}
    \caption{Exploitability with 95\% confidence interval after performing different types of test-time reasoning in every subgame with the same policy-gradient algorithm for a limited time.}
    \label{fig:small_games}
\end{figure}
Using the strongest blueprints in Goofspiel and Leduc, the solutions of resolving gadget seem to slightly increase exploitability even with extended train time. We hypothesize that there are two main reasons for this degradation. 1. Since the safety of gadget games is only guaranteed upon convergence, the allocated computation time was insufficient to reach equilibrium in earlier subgames, leading to compounding errors in the merged strategy. 2. The additional test-time training contains noise from several sources: the critic for the ``terminate'' action, samples from the belief model, and sampling of the trajectories.  

\subsection{Large games}
\label{sec:large_games}

Finally, we evaluated the same techniques in large-scale games where computing exact exploitability is intractable. We measured performance via head-to-head play against a baseline RNaD strategy. Both the subgame solver and the blueprint were allocated equal additional computation time at each decision point. The additional blueprint time was spent on further training the blueprint from the game's initial state. To ensure that the additional test-time reasoning does not exploit the knowledge of the opponent's strategy, we have always matched different training seeds.

The goal of this experiment is to verify that additional test-time reasoning improves upon the blueprint. We do not try to argue or show that this approach outperforms all prior techniques applicable to the used games.

A key implementation detail is that we reset the updated parameters to the blueprint after every playthrough. Early experiments suggested that without resetting, updates in one subgame would ``leak'' into others via shared neural network parameters, degrading global performance. Making these updates persistent without corrupting the global strategy remains a promising direction for future work. Even in tabular settings, the resolved strategy is used only temporarily and it does not update blueprint.

The results in \cref{fig:large_games} show that all methods outperform UEF. In Battleship, the UEF offered only marginal gains, likely because it is restricted to a single-step update. While Bayesian gadget is theoretically unsafe, it frequently outperformed the resolving gadget. This may be because resolving gadget weights the loss by the probability of ``continue'' action in the gadget decision, which effectively acts as a reduced learning rate. While common-knowledge approaches benefited from increased computation, KLSS appears to be hindered by it. Although head-to-head comparisons are not a perfect proxy for exploitability, these experiments demonstrate the potential for scalable test-time reasoning in large domains. 


\begin{figure*}
    \centering    
    \begin{subfigure}[b]{0.328\textwidth}
        \centering
    \includegraphics[width=0.99\linewidth]{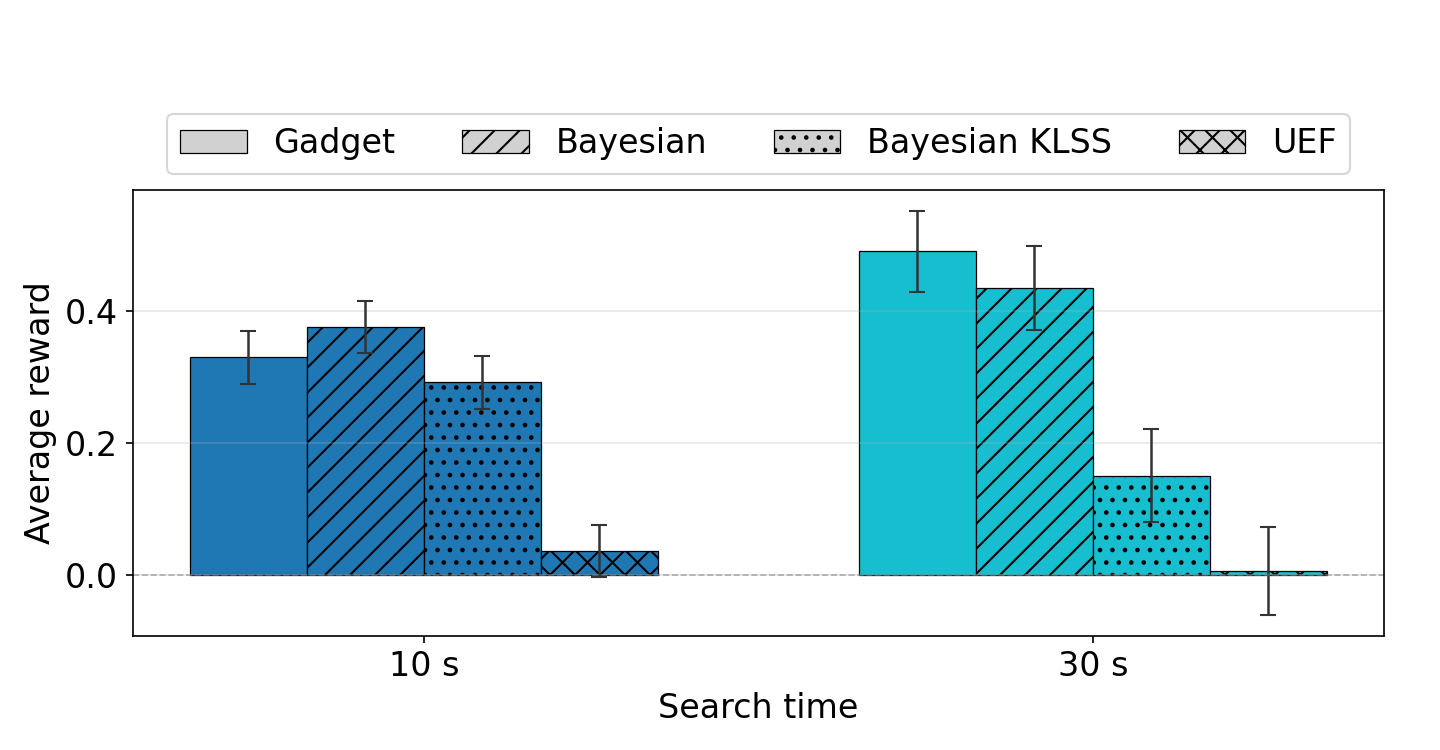}
        \caption{Battleship 7,[4,3,3,2]}
        \label{fig:bs7}
    \end{subfigure}
    \hfill 
    \begin{subfigure}[b]{0.328\textwidth}
        \centering
    \includegraphics[width=0.99\linewidth]{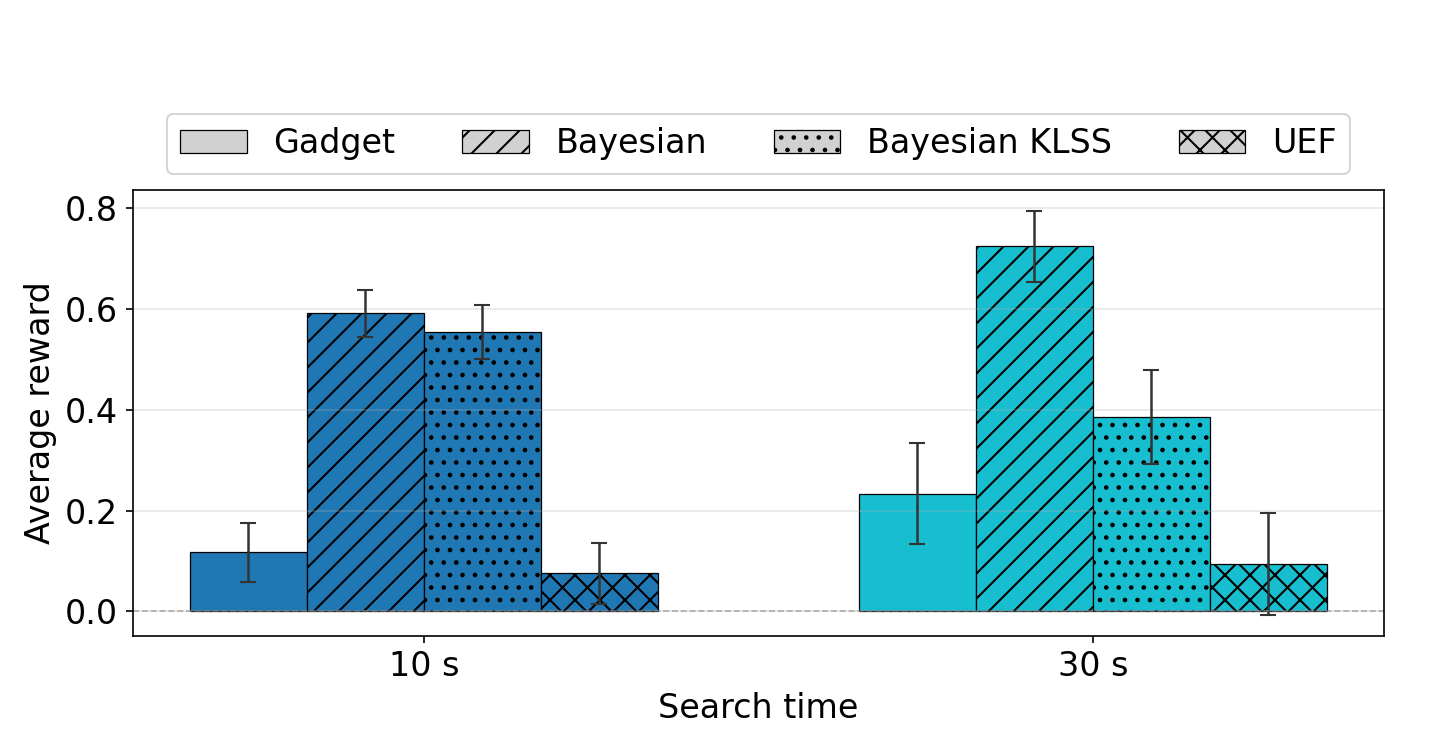}
        \caption{Battleship 10,[5,4,3,3,2]}
        \label{fig:bs10}
    \end{subfigure} 
    \hfill 
    \begin{subfigure}[b]{0.328\textwidth}
        \centering
    \includegraphics[width=0.99\linewidth]{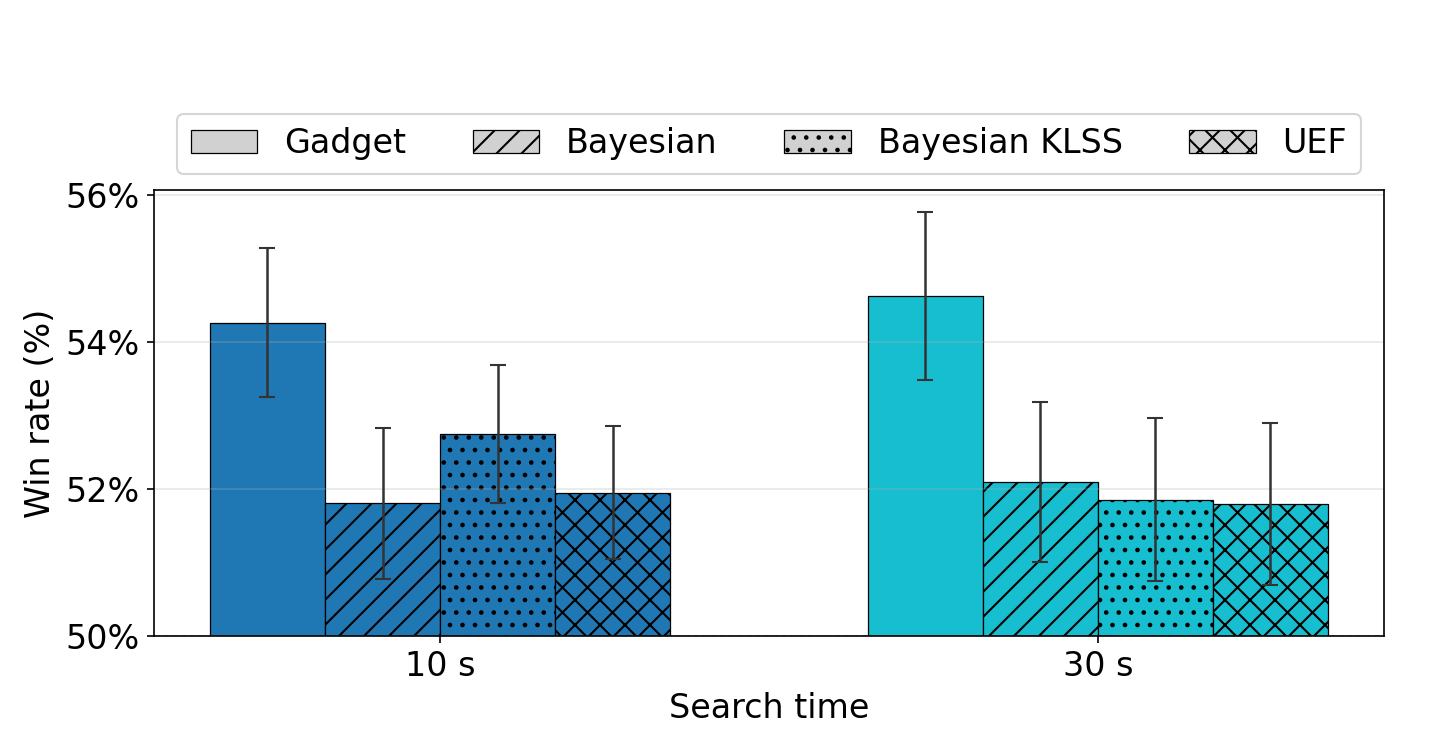}
        \caption{Goofspiel 10}
        \label{fig:hunl}
    \end{subfigure} 
    \caption{Head-to-head average reward with 95\% confidence intervals of different subgame solving techniques against the blueprint from different training seeds.}
    \label{fig:large_games} 
\end{figure*} 
\section{Conclusion and discussion}
In this work, we have presented a scalable framework for test-time reasoning in two-player zero-sum imperfect information games using policy-gradient algorithms. By extending the concept of gadget games to the reinforcement learning setting and introducing a generative belief model, we have removed the reliance on explicit subgame construction and tabular solvers. This allows test-time reasoning to scale to games, where subgame states may not be enumerable. Furthermore, because our approach only slightly modifies the core training algorithm, it significantly reduces design complexity compared to hybrid tabular-RL methods.

We demonstrated that resolving gadget game can be implicitly represented using a temporary actor that models the gadget's decision. We also established that KL regularization in modern policy-gradient algorithms serves as an implicit safety anchor, bounding how much the strategy changes even when using nominally unsafe Bayesian techniques.

We empirically evaluated these approaches in both small and large games, verifying that even in this setting, the additional test-time reasoning improves the performance over the blueprint. Notably, these techniques often yielded greater performance gains than the Update-equivalence framework (UEF) utilized in the superhuman Stratego bot Ataraxos \citep{sokota2024update,sokota2025stratego}.

We observed that the shared network parameters necessitate resetting weights after each playthrough. This highlights a potential for future research that would enable persistent local updates, allowing agents to retain test-time insights across playthroughs without degrading global strategy.

Another promising direction is to combine our approach with depth-limited solving to allow faster gradient updates in long games. Prior work has integrated an approximate imperfect-information value function with policy-gradient algorithms \citep{kubicek2024lookahead,kubicek2025lookahead}. We intentionally omitted such value functions to isolate the effects of test-time reasoning from the potential noise of value approximation. 
\newpage
\section*{Acknowledgments}
This work is supported by National Science Foundation grant RI-2312342, the Vannevar Bush Faculty Fellowship ONR N00014-23-1-2876, the Czech Science Foundation GA25-18353S, the Grant Agency of the Czech Technical University in Prague (SGS23/184/OHK3/3T/13). The access to the computational infrastructure of the OP VVV funded project CZ.02.1.01/0.0/0.0/16\_019/0000765 ``Research Center for Informatics'' is also gratefully acknowledged.  Any opinions, findings, and conclusions or recommendations expressed in this material are those of the author(s) and do not necessarily reflect the views of the funding agencies.
\bibliography{references}

\appendix
\section{Proofs}
\label[appendix]{app:proof}

\FixedMagnetKLDivergence*
We will follow the Magnetic Mirror Descent approach, which defines the regularization with respect to the Bregman divergence $\BregmanDiv{}$, which is more general than the KL divergence. We will denote $\Rewards{}$ as the reward matrix corresponding to \cref{eq:regularized_reward}. The strategies used throughout this proof refer to either the normal-form strategies or the sequence-form strategies, not the behavioral strategies. When using the fixed magnet $\Strategy{}^M = (\Strategy{1}^M, \Strategy{2}^M)$, the regularized objective can then be written of as
\begin{equation}
\label{eq:regularized_game_value}
 \GameValue(\Strategy{1}, \Strategy{2}) = \Strategy{1}^T \Rewards{} \Strategy{2} - \RegularizationWeight \BregmanDiv{\Regularizer}(\Strategy{1}, \Strategy{1}^M) +  \RegularizationWeight \BregmanDiv{\Regularizer}(\Strategy{2}, \Strategy{2}^M)
\end{equation}
For the remainder of this section, we will not denote the transposition of the strategy vector of player 1, when it is multiplied with the reward matrix, which will simplify the notation, but we will still treat it as transposed vector.
and the solution of such regularized game is
\begin{equation}
\label{eq:regularized_objective}
    \max_{\Strategy{1}} \min_{\Strategy{2}} \GameValue(\Strategy{1}, \Strategy{2})
\end{equation}

The fixed point solution $\Strategy{1}^\RegularizationWeight, \Strategy{2}^\RegularizationWeight$ of this minmax problem satisfies these conditions for each $\Strategy{1}$ and $\Strategy{2}$
\begin{align}
    \langle \nabla_{\Strategy{1}} \GameValue(\Strategy{1}^\RegularizationWeight, \Strategy{2}^\RegularizationWeight), \Strategy{1} - \Strategy{1}^\RegularizationWeight \rangle \leq 0 \label{eq:vi_player1}\\
    -\langle \nabla_{\Strategy{2}} \GameValue(\Strategy{1}^\RegularizationWeight, \Strategy{2}^\RegularizationWeight), \Strategy{2} - \Strategy{2}^\RegularizationWeight \rangle  \leq 0 \label{eq:vi_player2}
\end{align}

We can compute the gradient in \cref{eq:vi_player1}

\begin{align}
    \nabla_{\Strategy{1}} \GameValue(\Strategy{1}, \Strategy{2}) = \Rewards{}\Strategy{2} - \RegularizationWeight & \nabla_{\Strategy{1}}\BregmanDiv{\Regularizer}(\Strategy{1}, \Strategy{1}^M) = \Rewards{}\Strategy{2} - \RegularizationWeight \nabla \Regularizer(\Strategy{1}) + \RegularizationWeight \nabla \Regularizer(\Strategy{1}^M) \\
    \langle \nabla_{\Strategy{1}} \GameValue(\Strategy{1}^\RegularizationWeight, \Strategy{2}^\RegularizationWeight), \Strategy{1} - \Strategy{1}^\RegularizationWeight \rangle &= \langle \Rewards{}\Strategy{2}^\RegularizationWeight - \RegularizationWeight \nabla \Regularizer(\Strategy{1}^\RegularizationWeight) + \RegularizationWeight \nabla \Regularizer(\Strategy{1}^M), \Strategy{1} - \Strategy{1}^\RegularizationWeight \rangle \nonumber \\
    &= \langle \Rewards{}\Strategy{2}^\RegularizationWeight, \Strategy{1} - \Strategy{1}^\RegularizationWeight \rangle - \RegularizationWeight \langle  \nabla \Regularizer(\Strategy{1}^\RegularizationWeight) - \nabla  \Regularizer(\Strategy{1}^M), \Strategy{1} - \Strategy{1}^\RegularizationWeight \rangle
\end{align}
We will use the Bregman three-point property $\BregmanDiv{\Regularizer}(x, z) = \BregmanDiv{\Regularizer}(x, y) + \BregmanDiv{\Regularizer}(y, z) + \langle  \nabla \Regularizer(y) - \nabla  \Regularizer(z), x- y \rangle$
\begin{align}
   & \BregmanDiv{\Regularizer}(\Strategy{1}, \Strategy{1}^M) = \BregmanDiv{\Regularizer}(\Strategy{1}, \Strategy{1}^\RegularizationWeight) + \BregmanDiv{\Regularizer}(\Strategy{1}^\RegularizationWeight, \Strategy{1}^M) + \langle  \nabla \Regularizer(\Strategy{1}^\RegularizationWeight) - \nabla  \Regularizer(\Strategy{1}^M), \Strategy{1} - \Strategy{1}^\RegularizationWeight \rangle \nonumber\\
    &\langle  \nabla \Regularizer(\Strategy{1}^\RegularizationWeight) - \nabla  \Regularizer(\Strategy{1}^M), \Strategy{1} - \Strategy{1}^\RegularizationWeight \rangle = \BregmanDiv{\Regularizer}(\Strategy{1}, \Strategy{1}^M) -  \BregmanDiv{\Regularizer}(\Strategy{1}, \Strategy{1}^\RegularizationWeight) - \BregmanDiv{\Regularizer}(\Strategy{1}^\RegularizationWeight, \Strategy{1}^M)\\
    &\langle \nabla_{\Strategy{1}} \GameValue(\Strategy{1}^\RegularizationWeight, \Strategy{2}^\RegularizationWeight), \Strategy{1} - \Strategy{1}^\RegularizationWeight \rangle = \nonumber\\
    &= \langle \Rewards{}\Strategy{2}^\RegularizationWeight, \Strategy{1} - \Strategy{1}^\RegularizationWeight \rangle - \RegularizationWeight(\BregmanDiv{\Regularizer}(\Strategy{1}, \Strategy{1}^M) -  \BregmanDiv{\Regularizer}(\Strategy{1}, \Strategy{1}^\RegularizationWeight) - \BregmanDiv{\Regularizer}(\Strategy{1}^\RegularizationWeight, \Strategy{1}^M)) \leq 0\\
    &\Rightarrow \Strategy{1} \Rewards{}\Strategy{2}^\RegularizationWeight - \Strategy{1}^{\RegularizationWeight}\Rewards{}\Strategy{2}^\RegularizationWeight \leq  \RegularizationWeight(\BregmanDiv{\Regularizer}(\Strategy{1}, \Strategy{1}^M) - \BregmanDiv{\Regularizer}(\Strategy{1}, \Strategy{1}^\RegularizationWeight) -\BregmanDiv{\Regularizer}(\Strategy{1}^\RegularizationWeight, \Strategy{1}^M))
\end{align}
Similarly, we will derive for the second constraint
\begin{align}
    &-\langle \nabla_{\Strategy{2}} \GameValue(\Strategy{1}^\RegularizationWeight, \Strategy{2}^\RegularizationWeight), \Strategy{2} - \Strategy{2}^\RegularizationWeight \rangle \nonumber  \\
    &=-\langle \Strategy{1}^{\RegularizationWeight} \Rewards{}, \Strategy{2} - \Strategy{2}^\RegularizationWeight \rangle - \RegularizationWeight \langle \nabla \Regularizer(\Strategy{2}^\RegularizationWeight) - \nabla\Regularizer(\Strategy{2}^M) , \Strategy{2} - \Strategy{2}^\RegularizationWeight \rangle \nonumber \\
    & = -\langle \Strategy{1}^{\RegularizationWeight} \Rewards{}, \Strategy{2} - \Strategy{2}^\RegularizationWeight \rangle - \RegularizationWeight(\BregmanDiv{\Regularizer}(\Strategy{2}, \Strategy{2}^M) -  \BregmanDiv{\Regularizer}(\Strategy{2}, \Strategy{2}^\RegularizationWeight) - \BregmanDiv{\Regularizer}(\Strategy{2}^\RegularizationWeight, \Strategy{2}^M))  \leq 0\\
    &\Rightarrow  -\Strategy{1}^{\RegularizationWeight} \Rewards{} \Strategy{2} + \Strategy{1}^{\RegularizationWeight} \Rewards{} \Strategy{2}^\RegularizationWeight \leq \RegularizationWeight(\BregmanDiv{\Regularizer}(\Strategy{2}, \Strategy{2}^M) -  \BregmanDiv{\Regularizer}(\Strategy{2}, \Strategy{2}^\RegularizationWeight) - \BregmanDiv{\Regularizer}(\Strategy{2}^\RegularizationWeight, \Strategy{2}^M)) 
\end{align}

We can now sum those 2 inequalities
\begin{align}
\label{eq:mmd_vi_finished}
     \Strategy{1} \Rewards{} \Strategy{2}^\RegularizationWeight -\Strategy{1}^{\RegularizationWeight}\Rewards{}\Strategy{2}  &\leq \RegularizationWeight \big(\BregmanDiv{\Regularizer}(\Strategy{1}, \Strategy{1}^M)  +\BregmanDiv{\Regularizer}(\Strategy{2}, \Strategy{2}^M) + \nonumber \\
         &- \BregmanDiv{\Regularizer}(\Strategy{1}, \Strategy{1}^\RegularizationWeight) - \BregmanDiv{\Regularizer}(\Strategy{2}, \Strategy{2}^\RegularizationWeight) - \BregmanDiv{\Regularizer}(\Strategy{1}^\RegularizationWeight, \Strategy{1}^M) - \BregmanDiv{\Regularizer}(\Strategy{2}^\RegularizationWeight, \Strategy{2}^M) \big)
\end{align}

The \cref{eq:mmd_vi_finished} holds for every $\Strategy{} = (\Strategy{1}, \Strategy{2})$, so it has to holds even for $\Strategy{} = \Strategy{}^M$ and $\BregmanDiv{x, x} = 0$

\begin{equation} 
     \Strategy{1}^{M} \Rewards{} \Strategy{2}^\RegularizationWeight -\Strategy{1}^{\RegularizationWeight}\Rewards{}\Strategy{2}^M  \leq \RegularizationWeight \big(- \BregmanDiv{\Regularizer}(\Strategy{1}^M, \Strategy{1}^\RegularizationWeight) - \BregmanDiv{\Regularizer}(\Strategy{2}^M, \Strategy{2}^\RegularizationWeight) - \BregmanDiv{\Regularizer}(\Strategy{1}^\RegularizationWeight, \Strategy{1}^M) - \BregmanDiv{\Regularizer}(\Strategy{2}^\RegularizationWeight, \Strategy{2}^M)\big)
\end{equation} 

\begin{align} 
      \BregmanDiv{\Regularizer}(\Strategy{1}^M, \Strategy{1}^\RegularizationWeight) + \BregmanDiv{\Regularizer}(\Strategy{2}^M, \Strategy{2}^\RegularizationWeight) &+ \BregmanDiv{\Regularizer}(\Strategy{1}^\RegularizationWeight, \Strategy{1}^M) + \BregmanDiv{\Regularizer}(\Strategy{2}^\RegularizationWeight, \Strategy{2}^M) \nonumber\\ &\leq \frac{1}{\RegularizationWeight}(\Strategy{1}^{\RegularizationWeight}\Rewards{}\Strategy{2}^M - \Strategy{1}^{M} \Rewards{} \Strategy{2}^\RegularizationWeight) \nonumber \\
      &\leq \frac{1}{\RegularizationWeight} (\max_{\Strategy{1}} \Strategy{1}\Rewards{}\Strategy{2}^M - \min_{\Strategy{2}} \Strategy{1}^{M} \Rewards{} \Strategy{2}) = \frac{\Exploitability(\Strategy{}^M)}{\RegularizationWeight}
\end{align}

We know that the Bregman divergence is always non-negative, which allows us to get a bound on the resulting KL diergence for the strategy profile

\begin{equation}
\label{eq:mmd_expl_bound}
    \BregmanDiv{\Regularizer}(\Strategy{1}^\RegularizationWeight, \Strategy{1}^M) + \BregmanDiv{\Regularizer}(\Strategy{2}^\RegularizationWeight, \Strategy{2}^M) \leq \frac{\Exploitability(\Strategy{}^M)}{\RegularizationWeight}
\end{equation}
\FixedMagnetExploitability*

\begin{proof}  
Exploitability of the $\Strategy{}^{\RegularizationWeight}$ is 
\begin{align}
    &\Exploitability(\Strategy{}^{\RegularizationWeight}) = \Strategy{1}^{\BRIndex} \Rewards{} \Strategy{2}^\RegularizationWeight -\Strategy{1}^{\RegularizationWeight}\Rewards{}\BRStrategy{2} \nonumber \\
    &= \Strategy{1}^{\BRIndex} \Rewards{} \Strategy{2}^\RegularizationWeight -\Strategy{1}^{\RegularizationWeight}\Rewards{}\BRStrategy{2} + \Strategy{1}^{\BRIndex} \Rewards{} \Strategy{2}^M - \Strategy{1}^{\BRIndex} \Rewards{} \Strategy{2}^M  + \Strategy{1}^{M} \Rewards{} \BRStrategy{2}- \Strategy{1}^{M} \Rewards{} \BRStrategy{2} \nonumber \\
    &= \Strategy{1}^{\BRIndex} \Rewards{} \Strategy{2}^\RegularizationWeight - \Strategy{1}^{\BRIndex} \Rewards{} \Strategy{2}^M + \Strategy{1}^{\BRIndex} \Rewards{} \Strategy{2}^M - \Strategy{1}^{M} \Rewards{} \BRStrategy{2} +  \Strategy{1}^{M} \Rewards{} \BRStrategy{2} - \Strategy{1}^{\RegularizationWeight}\Rewards{}\BRStrategy{2} \nonumber\\
    &= \Strategy{1}^{\BRIndex} \Rewards{}(\Strategy{2}^\RegularizationWeight - \Strategy{2}^M) + (\Strategy{1}^{\BRIndex} \Rewards{} \Strategy{2}^M - \Strategy{1}^{M} \Rewards{} \BRStrategy{2}) + (\Strategy{1}^{M} -\Strategy{1}^{\RegularizationWeight})\Rewards{}\BRStrategy{2} 
\end{align}
We can now use following bound for middle terms
\begin{equation}
    \Strategy{1}^{\BRIndex} \Rewards{} \Strategy{2}^M - \Strategy{1}^{M} \Rewards{} \BRStrategy{2} \leq \max_{\Strategy{1}} \Strategy{1} \Rewards{} \Strategy{2}^M - \min_{\Strategy{2}} \Strategy{1}^{M} \Rewards{} \Strategy{2} = \Exploitability(\Strategy{}^M)
 \end{equation}

We can bound the other two terms by first using Hölder's inequality and then Pinkser's inequality
\begin{align}
    \Strategy{1}^{\BRIndex} \Rewards{}(\Strategy{2}^\RegularizationWeight - \Strategy{2}^M) &\leq |\Strategy{1}^{\BRIndex} \Rewards{}(\Strategy{2}^\RegularizationWeight - \Strategy{2}^M)| \nonumber \\
    &\leq ||\Strategy{1}^{\BRIndex}\Rewards{}||_{\infty} |(\Strategy{2}^\RegularizationWeight - \Strategy{2}^M)| \nonumber\\
    & \leq ||\Rewards{}||_{\infty} |(\Strategy{2}^\RegularizationWeight - \Strategy{2}^M)|  \nonumber \\
    &\leq ||\Rewards{}||_{\infty} \sqrt{2 \KLDiv{\Strategy{2}^\RegularizationWeight}{\Strategy{2}^M}}\\
   (\Strategy{1}^{M} -\Strategy{1}^{\RegularizationWeight})\Rewards{}\BRStrategy{2}    &\leq ||\Rewards{}||_{\infty} \sqrt{2 \KLDiv{\Strategy{1}^\RegularizationWeight}{\Strategy{1}^M}}
\end{align}

We can further use inequality $\sqrt{x} + \sqrt{y} \leq \sqrt{2} \sqrt{x + y}$
\begin{align}
    \Exploitability(\Strategy{}^{\RegularizationWeight}) &\leq \Exploitability(\Strategy{}^M) +  ||\Rewards{}||_{\infty} \sqrt{2 \KLDiv{\Strategy{1}^\RegularizationWeight}{\Strategy{1}^M}} +  ||\Rewards{}||_{\infty} \sqrt{2 \KLDiv{\Strategy{1}^\RegularizationWeight}{\Strategy{1}^M}} \\
    & \leq \Exploitability(\Strategy{}^M) + ||\Rewards{}||_{\infty}  \sqrt{ 2 \cdot(2 \KLDiv{\Strategy{1}^\RegularizationWeight}{\Strategy{1}^M} + 2 \KLDiv{\Strategy{2}^\RegularizationWeight}{\Strategy{2}^M})}
\end{align}
Since the used regularizer $\Regularizer$ is entropy, we can use \cref{eq:mmd_expl_bound} to show
\begin{equation}
    \Exploitability(\Strategy{}^{\RegularizationWeight}) \leq \Exploitability(\Strategy{}^M) + ||\Rewards{}||_{\infty} \sqrt{\frac{4 \Exploitability(\Strategy{}^M)}{\RegularizationWeight}}
\end{equation}
\end{proof}
 
\section{Experimental details}
\label[appendix]{app:exp_details}
The codebase for this paper is written in Python 3.13, with libraries from the Jax ecosystem \citep{jax2018github,flax2020github,deepmind2020jax}. Every experiment run used a single core of an AMD EPYC 7543 with 16 GB of RAM and a single Nvidia A100.

\subsection{Game rules}

\paragraph{Biased Matching Pennies} Biased Matching Pennies is a one-shot game, in which players choose a side of a coin, either Heads or Tails. If the choices match, Player 1 wins; otherwise, Player 2 wins. In this version, the rewards are doubled if both choose Tails. The normal-form representation of this game is in \cref{tab:biased_mp}. In our experiments, we treat this as a sequential game in which Player 2 acts first, followed by Player 1, who did not observe Player 2's choice.

\paragraph{Rock-Paper-Scissors} Rock-Paper-Scissors is a one-shot game where each player selects one of three actions: Rock, Paper, or Scissors. An identical choice results in a draw, rock defeats scissors, paper defeats rock, and scissors defeats paper. The normal-form representation is in \cref{tab:rps}. As with Matching Pennies, we implement this sequentially, with Player 2 acting first and Player 1 acting with no knowledge of the opponent's selection.

\begin{table}[h]
    \centering 
    \caption{Normal-form representation of the games used in \cref{sec:tiny_games}}
    \label{tab:matrix_games}
    \begin{subtable}[b]{0.45\textwidth}
        \caption{Biased Matching Pennies}
        \label{tab:biased_mp}
        \centering
        \[
        \begin{array}{cc|c|c|}
         & \multicolumn{1}{c}{} & \multicolumn{2}{c}{\text{P2}} \\
         & \multicolumn{1}{c}{} & \multicolumn{1}{c}{\text{H}} & \multicolumn{1}{c}{\text{T}} \\
        \cline{3-4}
        & \text{H} & 1 & 0 \\
        \cline{3-4}
         \text{P1}  & \text{T} & 0 & 2\\
        \cline{3-4} 
        \end{array}
        \]
    \end{subtable}
    \hfill 
    \begin{subtable}[b]{0.45\textwidth}
        \caption{Rock-Paper-Scissors}
        \label{tab:rps}
        \centering
        \[
        \begin{array}{cc|c|c|c|}
         & \multicolumn{1}{c}{} & \multicolumn{3}{c}{\text{P2}} \\
         & \multicolumn{1}{c}{} & \multicolumn{1}{c}{\text{R}} & \multicolumn{1}{c}{\text{P}} & \multicolumn{1}{c}{\text{S}} \\
        \cline{3-5}
         & \text{R} & 0 & -1 & 1 \\
        \cline{3-5}
         \text{P1}  & \text{P} & 1 & 0 & -1 \\
        \cline{3-5}
         & \text{S} & -1 & 1 & 0 \\
        \cline{3-5}
        \end{array}
        \]
    \end{subtable}
\end{table}

\paragraph{Battleship} Battleship $S, L$ is a two-player game which is played on two private $S\times S$ grids. Each player has a set of ships $L$ of varying lengths (width is always 1). In the first phase of the game, players secretly place their ships on their respective grids, so that no ship can be placed in the 8-neighborhood of any other ship. Then, in the second phase, players take turns firing shots at the opponent's board, receiving only feedback on whether the ship was hit or not. When all segments of a ship are hit, the ship is publicly declared sunken. The game concludes when one player sinks all of the opponent's ships. An example of notation used in the paper is Battleship $7, [4, 3, 3, 2]$, which is played on $7 \times 7$ grids with ships of sizes 4, 3, 3, 2.

\paragraph{Goofspiel} Imperfect Information Goofspiel $X$ is a card game, where each player has cards valued from 1 to $X$, each only once. In each of the $X$ rounds, a prize card from the dealer's deck is revealed. Players submit a secret bid using one card from their hand, and the player with the highest bid wins points equal to the value of the prize card. In the event of a tie, no points are awarded. Bids are not publicly revealed, only the outcome of the round. In our version, the dealer's deck is not random, but it contains prize cards in descending order from $X$ to 1.

\paragraph{Leduc Hold 'em} Leduc Hold 'em is a simplified poker variant played with 6 cards (2 Jacks, 2 Queens, 2 Kings). At the beginning of the game, each player is dealt a single card. The game then progresses through 2 betting rounds. After the first round, the public card is revealed. Each betting round can consist of at most 2 raises. If any player has the same card as the public one, they win. The player with the higher-ranked card wins, where the ranking order is: Jack $<$ Queen $<$ King.


\subsection{Subgame solving}
To show that additional training using the gadget game is a minor change to the training algorithm, we provide pseudocode for this approach in \cref{alg:gadget}, highlighting the modifications in red. This pseudocode also shows how the Bayesian unsafe approach is used, as it differs from the blueprint training only in sampling world states using the belief model rather than starting from the initial state.
\begin{algorithm}
\caption{Subgame Solving}
\label{alg:gadget}
\begin{algorithmic}[1]
\Require Blueprint parameters $\NeuralParameters$, resolving player $\Player$, opponent $\OtherPlayer$, information set $\Infoset{\Player}$
\State All techniques, \textcolor{red}{Gadget}
\State $\GadgetParameters \gets $ Initialize gadget actor parameters 
\State $\NeuralParameters' \gets \NeuralParameters$
\State $\PublicState \gets \PublicState(\Infoset{\Player})$ 
\While{Available budget}
    \State Sample a $\WorldState$ world state using the belief model \hfill \textit{Initial state in the training}
    \State  Sample trajectory $\Trajectory$ from $\WorldState$ using $\Strategy{\NeuralParameters'}$
    \State Estimate counterfactual values $\CounterfactualValue{k, \text{target}}{\Strategy{\NeuralParameters'}}, q_{k, \text{target}}^{\Strategy{\NeuralParameters'}}$ for each player $k$ in every encountered infoset in the trajectory $\Trajectory$
    \State \textcolor{red}{$q_{\mathrm{term}} \gets \CounterfactualValue{\OtherPlayer, \NeuralParameters}{}\!\left(\Infoset{\OtherPlayer}(\WorldState)\right)$ \hfill \textit{Blueprint critic value}}
    \State \textcolor{red}{$q_{\mathrm{cont}} \gets \CounterfactualValue{\OtherPlayer, \text{target}}{\Strategy{\NeuralParameters'}}(\Infoset{\OtherPlayer}(\WorldState))$ \hfill \textit{Target value}}
    \State \textcolor{red}{Update $\GadgetParameters$ with policy-gradient loss using $\left(q_{\mathrm{term}},\, q_{\mathrm{cont}}\right)$ at $\Infoset{\OtherPlayer}(\WorldState)$}
    \State \textcolor{red}{$p_{\mathrm{cont}} \gets \Strategy{\OtherPlayer, \GadgetParameters}\!\left(\Infoset{\OtherPlayer}(\WorldState),\, \mathrm{cont}\right)$ }
    \State Update $\NeuralParameters'$ with actor and critic losses \textcolor{red}{scaled by $p_{\mathrm{cont}}$}
\EndWhile
\State \Return $\NeuralParameters'$
\end{algorithmic}
\end{algorithm}

\subsection{Belief model}
We have conducted an additional experiment to verify the quality of the samples from the trained belief models used in \cref{sec:small_games}. We have extracted the reaches from the trained blueprint for each possible state, and we have also extracted the probability distributions from the transformer's logits. We then compared those two distributions using Jensen-Shannon divergence, and we show the mean and maximum of those divergences from each public state in \cref{fig:js_divergence}.

Surprisingly, the belief model trained for the gadget game, which uses importance sampling correction, sometimes produces more accurate samples even with the added variance. This mostly happens with stronger blueprints in Goofspiel 5, where public states later in the game contain hundreds of states. The likely cause is that, in Bayesian solving, the reaches are closer to zero than in the gadget game, which makes them harder to predict correctly due to the variance introduced by the sampling itself.

\begin{figure}
    \centering
    \includegraphics[width=0.99\linewidth]{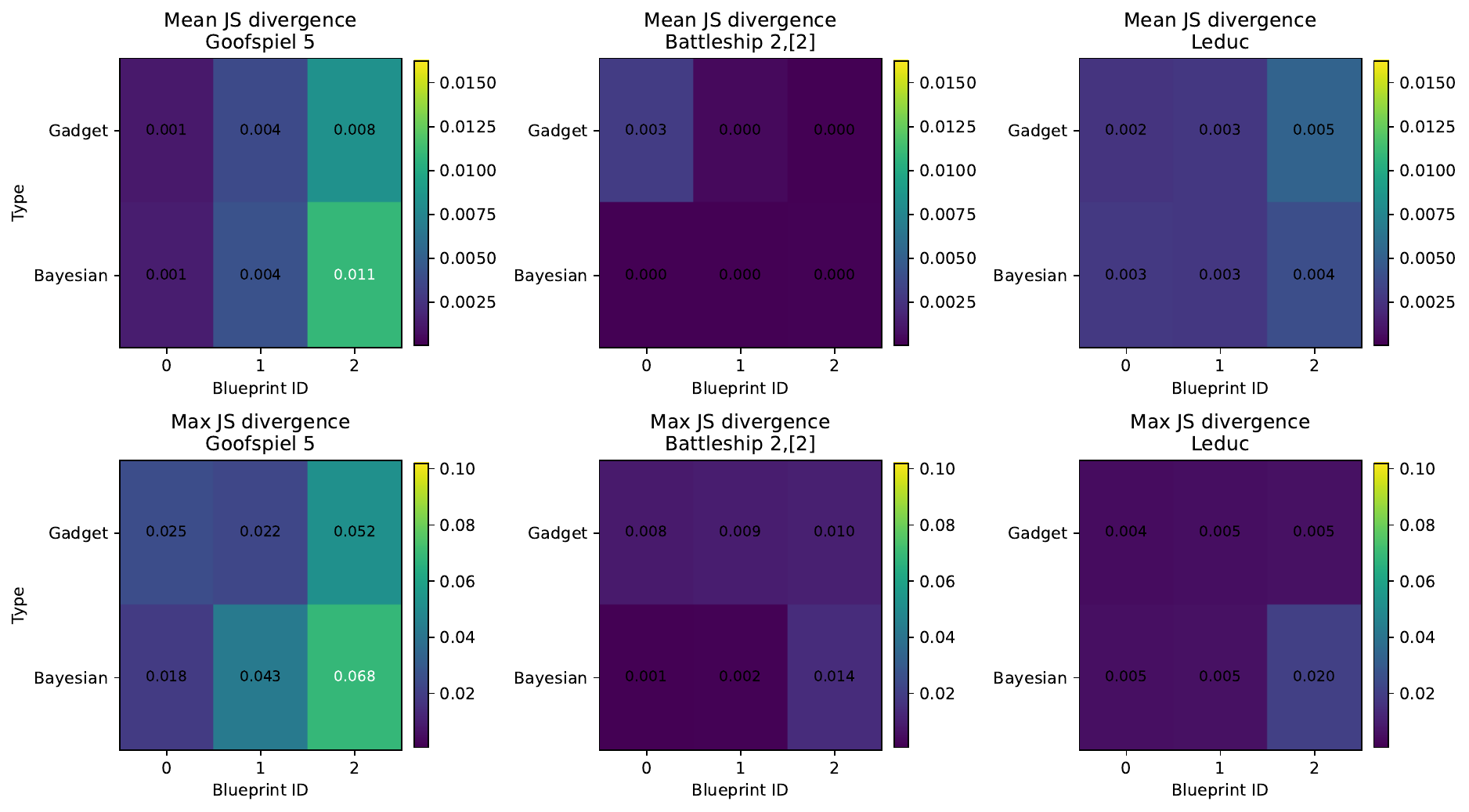}
    \caption{Mean and maximum of Jensen-Shannon divergence between the actual blueprint reaches and the samples from the belief model across each public state in the game.}
    \label{fig:js_divergence}
\end{figure}

\subsection{Choice of the gadget}
In this paper, we have used the resolving gadget game, but there is another type of gadget game \textit{max-margin gadget game}, which further improves upon the resolving gadget game by maximizing the minimal improvement in the subgame \citep{moravcik2016maxmargin}. Recent work has shown that the resolving gadget game may contain stronger equilibria than the max-margin \citep{kubicek2026refinements}. 

In this work, we did not use the max-margin because it switches the order of gadget decision. In a resolving gadget game, the chance first decides the state and only then does the opponent select its belief. In max-margin, the opponent selects its belief, and only then is the particular state selected. This means that the opponent needs to decide belief across all of its information sets at the root of the subgame $\Subgame{\Infoset{}}$. However, in large games, it is not known a priori how many information sets there are at the root of the subgame. Simulating the max-margin gadget would require this knowledge or having a way to add new actions to the network on the fly. On the other hand, the resolving gadget only adds a decision with 2 actions, so the neural network can easily simulate it.

Consider a situation in which the opponent reaches a part of the subgame only through suboptimal play. It gives you the ``gift'' in utility you can get in that subgame. \textit{Reach max-margin} subgame solving redistributes this gift to allow more gains in the parts of the subgame that can be reached in optimal play~\citep{brown2017reachmaxmargin}. The gifts can be combined with any gadget game. However, in this work, we did not explore this combination with our simulation and leave this for future research.

\subsection{Hyperparameters}
We have used two sets of hyperparameters, one set for small experiments in \cref{sec:tiny_games,sec:small_games} and the second set for large experiments \cref{sec:large_games}. Both are in  \cref{tab:hyperparameters}

\begin{table}[htbp]
    \centering
    \caption{Hyperparameters in all experiments}
    \label{tab:hyperparameters}
    \begin{tabular}{llll}
        \toprule  \textbf{Parameter} & \textbf{Small} & \textbf{Large} \\
        \midrule 
            RNaD regularization $\eta$  & $0.2$ & $0.2$\\
            RNaD regularization policy change & $2000$ & $20000$\\
            NeuRD $\beta$ clip  & $2$ & $2$ \\
            NeuRD avantage clip  & $5$ & $5$\\
            MLP hidden sizes & $256$ & $1024$\\
            MLP hidden layers & $2$ & $2$ \\
            Belief transformer heads & $2$ & $4$\\
            Belief transformer layers & $4$ & $8$ \\
            Belief transformer dense size & $128$ & $256$\\
            Activation function & GELU & GELU \\
            Normalization function & RMSNorm & RMSNorm \\
            Batch size & 64 & 256 \\
            Strategy learning rate & $3 \cdot 10^{-4}$ & $1 \cdot 10^{-4}$ \\
            Belief learning rate & $1 \cdot 10^{-3}$ & $1 \cdot 10^{-3}$ \\
            Gadget learning rate & $1 \cdot 10^{-3}$ & $1 \cdot 10^{-3}$ \\
            Optimizer & ADAM & ADAM \\
            ADAM $\beta_1, \beta_2$ & $0.99, 0.999$ & $0.99, 0.999$ \\
            V-trace $\lambda$ & $0.95$  & $0.95$\\
            Discount Factor $\gamma$ & $1.0$  & $1.0$\\
        \bottomrule
    \end{tabular}
\end{table}

\subsection{Computational resources}
\paragraph{One-shot games} For each game in this category, we trained five independent blueprint strategies and their corresponding belief models. Subsequently, test-time reasoning was run for each blueprint. The total computational time for these experiments was less than 24 GPU-hours per game.
\paragraph{Small games} For each game, we utilized a single training run and selected three checkpoints representing different stages of strategy convergence. For each checkpoint, we trained three distinct belief models: the first is used for Bayesian solving, and samples states conditioned on the public state based on the probability of reaching those states when following the blueprint. The second is used for gadget games and also samples states conditioned on the public state, but it is unbiased with respect to the opponent's strategy. The third is used for KLSS and UEF, and it samples states conditioned on the information state. The initial training phase required approximately 12 GPU-hours per game. For the evaluation phase, we allocated 2, 5, or 10 seconds of training time per decision point across five different seeds. A single evaluation run was completed in under 6 hours. This makes the total time for this experiment $3\cdot(12 + 5\cdot 6 \cdot 3) = 306$ GPU-hours.
\paragraph{Large games} We trained five independent blueprint strategies for each large-scale game, each requiring 72 GPU-hours. This was followed by an additional 72 GPU-hours per blueprint to train the three belief model variants. Finally, the head-to-head evaluations lasted approximately 100 hours per configuration. The total computational time is estimated to be at most $3\cdot(5\cdot (72 + 72 \cdot 3) + 2\cdot4\cdot100) = 6720$ GPU-hours. The games played in that time-frame for each bar in \cref{fig:large_games} are in \cref{tab:large_games}

\begin{table}[!h]
\begin{tabular}{l|c|c}
Game                           & Time & Playthroughs \\  \hline
\multirow{2}{*}{Battleship 7}  & 10 s & \textasciitilde $1500$            \\
                               & 30 s & \textasciitilde$ 500$          \\ \hline
\multirow{2}{*}{Battleship 10} & 10 s & \textasciitilde$ 800$           \\
                               & 30 s & \textasciitilde$ 300$           \\ \hline
\multirow{2}{*}{Goofspiel 10}  & 10 s & \textasciitilde$ 1500$          \\
                               & 30 s & \textasciitilde$ 1300$           
\end{tabular}
\centering
\caption{Amount of playthroughs used for each bar in \cref{fig:large_games}.}
\label{tab:large_games}
\end{table}

\section{Additional Experiments}
We provide several additional experiments to accompany the main results from the paper
\subsection{Effect of magnet strength in unsafe resolving}
To verify the effects of the bounded strategy change from \cref{thm:fixed_magnet_kl_divergence}, we conducted the same experiment as in \cref{sec:small_games} in Rock-paper-scissors, by using the varied strengths of magnets to solve the same subgame. We show the results of this experiment in \cref{fig:varying_magnet}. 

The blueprints used have a small bias towards one action, making the optimal Bayesian subgame solution a pure strategy that is fully exploitable. The magnet then prevents the strategy from changing too much, and, as expected, the lower the magnet regularization, the closer the algorithm converges to the best response.

\begin{figure}[!h]
    \centering
    \includegraphics[width=0.5\linewidth]{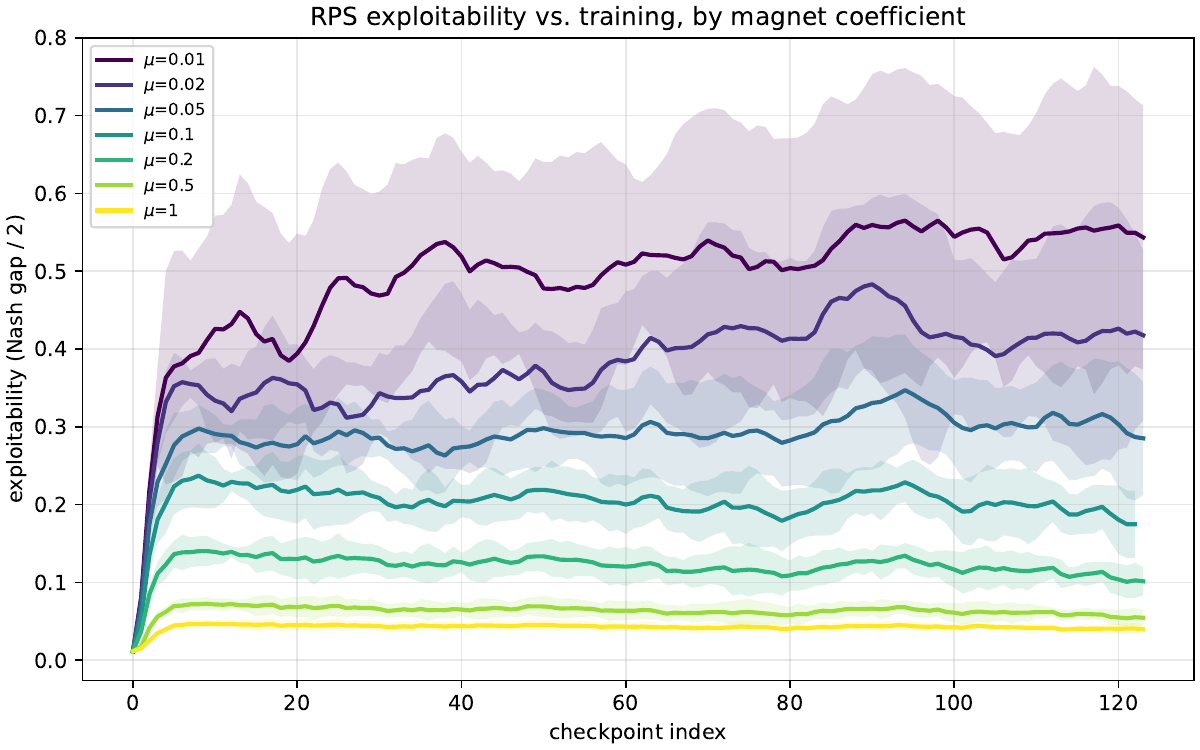}
    \caption{Exploitability in Rock-paper-scissors, when solving a single subgame using Bayesian subgames with fixed magnet, for varying strengths of the magnet.}
    \label{fig:varying_magnet}
\end{figure}

\subsection{Head-to-head performance against other opponents}
In addition to the experiments in \cref{sec:large_games}, we also matched all the subgame-solving techniques and blueprints used against two types of players in Battleship 7,[4,3,3,2]. First, being a uniform random player, which picks an action at random, and second, being a strong heuristic player that uses strategies often employed by strong human players. 

The heuristic player randomly positions its ships. Then, during the shooting phase, it always shoots a checkboard pattern based on the smallest opponent's ship. When the ship is found, the player tries to sink it as quickly as possible, then continues with the checkboard pattern shooting.

We show the results with 95\% confidence intervals in \cref{fig:large_games_add}. The results are in line with those from \cref{sec:large_games}, where Gadget and both Bayesian approaches significantly improve upon the blueprint, but the performance improvement among them is inconclusive. Although UEF also improves over the blueprint, the gains are smaller than those of the other methods. 

\begin{figure*}
    \centering    
    \begin{subfigure}[b]{0.49\textwidth}
        \centering
    \includegraphics[width=0.99\linewidth]{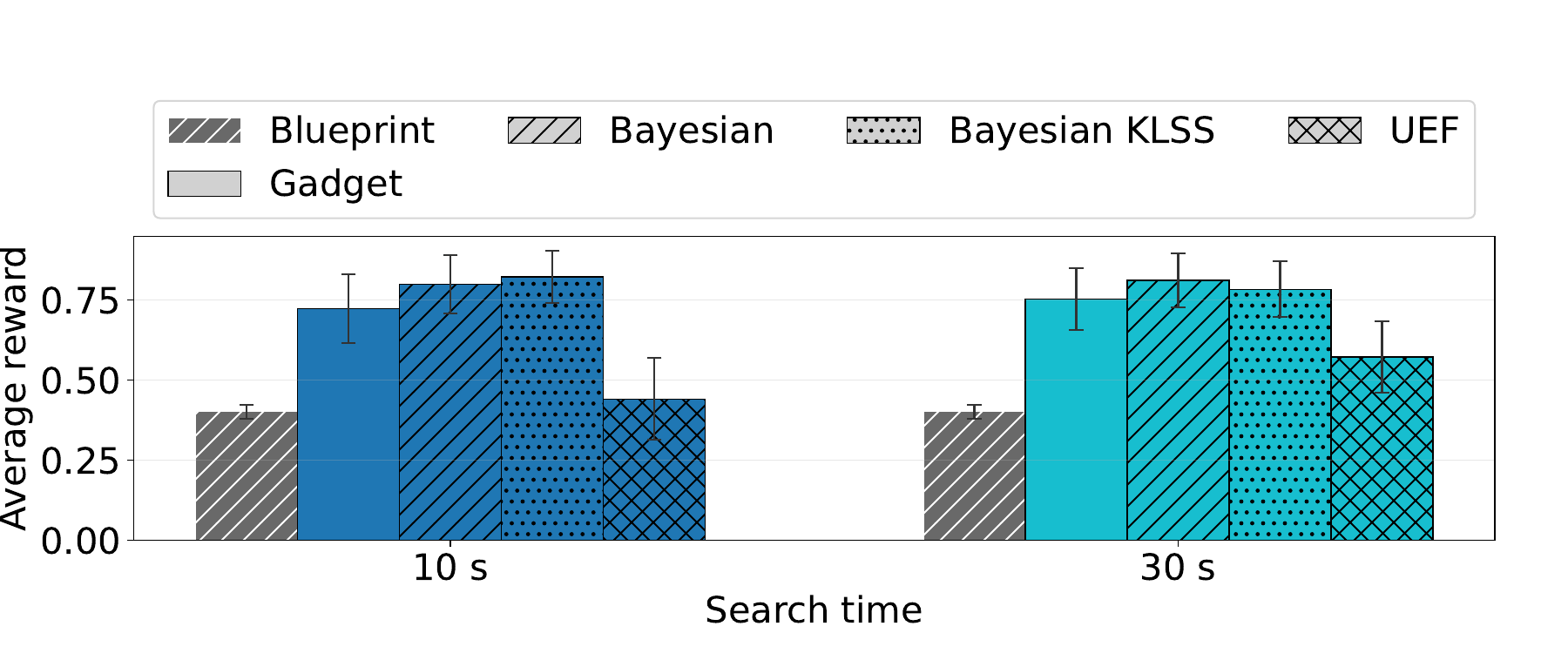}
        \caption{Random opponent}
        \label{fig:bs7_random}
    \end{subfigure}
    \hfill 
    \begin{subfigure}[b]{0.49\textwidth}
        \centering
    \includegraphics[width=0.99\linewidth]{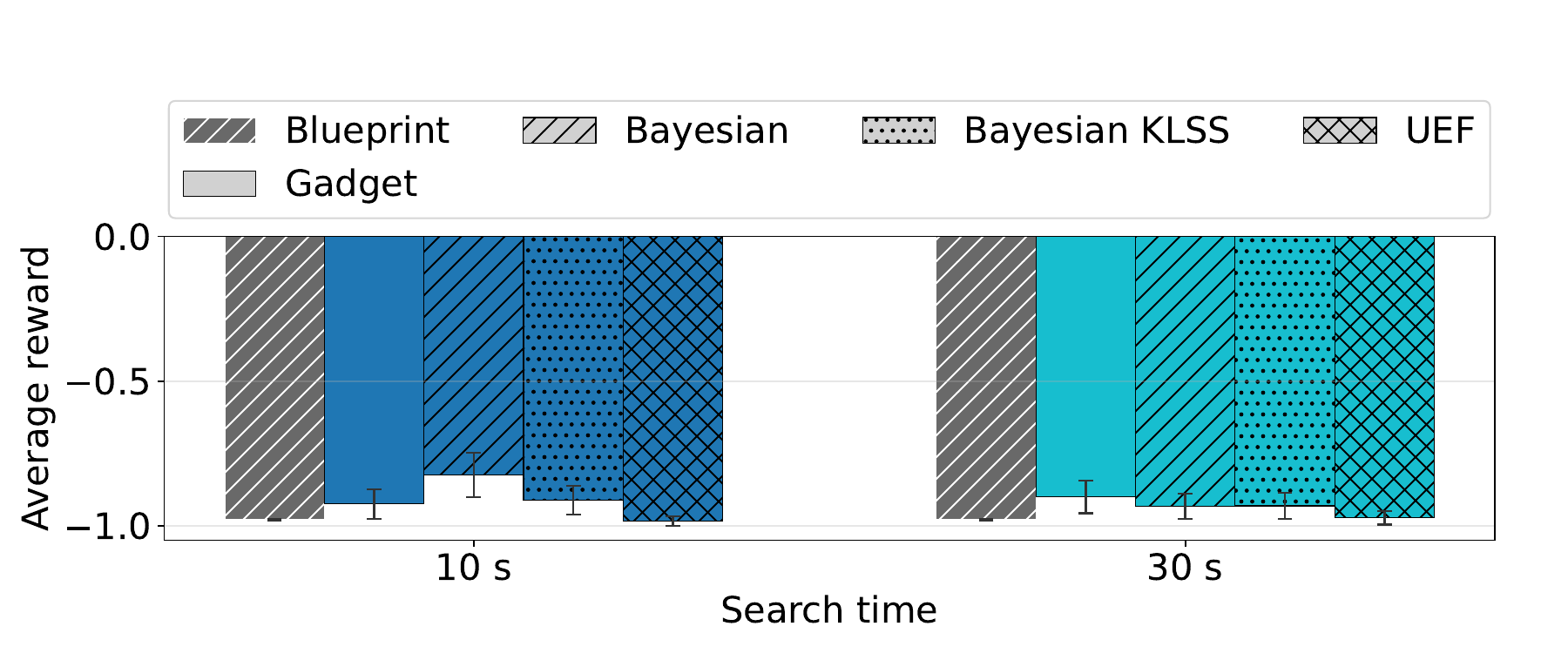}
        \caption{Heuristic opponent}
        \label{fig:bs7_heuristic}
    \end{subfigure}  
    \caption{Head-to-head average reward with 95\% confidence intervals of blueprint and different subgame solving techniques against the different types of opponents in Battleship 7,[4,3,3,2].}
    \label{fig:large_games_add} 
\end{figure*} 
\section{LLM usage}
We have used Claude Sonnet 4.6 (with Claude Code) and Cursor Composer 2 during the implementational part of our work. We have used Claude Sonnet 4.6 and Gemini 3 during the writing of this paper. These models were not used to introduce novel ideas into the paper, but to refine the writing. Besides that, the models were used as a help in constructing the proofs in the paper. All outputs from those models have been double-checked before inclusion in the final version.




\end{document}